\documentclass[11pt]{article}

\usepackage{geometry}
\geometry{left=3cm,right=3cm,top=1.5cm,bottom=3cm}
\usepackage{authblk}
\usepackage[dvipsnames]{xcolor}
\usepackage[colorlinks,linkcolor=blue,citecolor=PineGreen]{hyperref}
\usepackage[latin1]{inputenc}
\usepackage{amsmath}
\usepackage{graphicx}
\usepackage{amssymb}
\usepackage{amsthm}
\usepackage{verbatim}
\usepackage{epsfig}
\usepackage[labelfont=bf]{caption}
\usepackage{enumerate}
\usepackage{subfig}
\usepackage{epstopdf}

\newtheorem{lemma}{Lemma}
\newtheorem{remark}{Remark}

\newtheorem{proposition}{Proposition}
%%%%%%%%

%%%%%%%%%%%%%%%%%%%%%%%
\def\begequarr{\begin{eqnarray}}
\def\endequarr{\end{eqnarray}}
\def\begequarrs{\begin{eqnarray*}}
\def\endequarrs{\end{eqnarray*}}
\def\begarr{\begin{array}}
\def\endarr{\end{array}}
\def\begequ{\begin{equation}}
\def\endequ{\end{equation}}
\def\lab{\label}
\def\begdes{\begin{description}}
\def\enddes{\end{description}}
\def\begenu{\begin{enumerate}}
\def\begite{\begin{itemize}}
\def\endite{\end{itemize}}
\def\endenu{\end{enumerate}}

\def\lef[{\left[\begin{array}}
\def\rig]{\end{array}\right]}
\def\qed{\hfill$\Box \Box \Box$}
\def\begcen{\begin{center}}
\def\endcen{\end{center}}
\def\begrem{\begin{remark}\rm}
\def\endrem{\end{remark}}
%%%%%%

%%%%%%%%%%%%%%%%%%%%%%

\def\begmat#1{\begin{bmatrix}#1\end{bmatrix}}
\def\begali#1{\begin{align}{#1}\end{align}}
\def\begalis#1{\begin{align*}{#1}\end{align*}}
%%%%%%%%%%%%%%%%%%%%%%%%%%%%%%%%%%%%%%%%%%%%%

%%%%%%%%%%%%%%%%%%%%%%%

\def\calh{{\cal H}}

\def\calm{{\cal M}}

\def\cals{{\cal S}}

\def\call{{\cal L}}

%%%%%%%%%%%%%%%%%%%%%%%%%%%%

%%%%%%%%%%%%%%%%%%%%%%%%%
\def\callinf{{\cal L}_\infty}
\def\L2e{{\cal L}_{2e}}

\def\rea{\mathbb{R}}

\def\dist{\mbox{dist}}

\def\col{\mbox{col}}
\def\hal{{1 \over 2}}
\def\et{\varepsilon_t}

\def\II{I\&I}
%%%%%%%%%%%%%%%%%%%%%%%%%%%%%%

%%%%%%%%%%%%%%%%%%%%%%%%%

\def\IJRNLC{{\it Int. J. on Robust and Nonlinear Control}}
\def\TAC{{\it IEEE Trans. Automatic Control}}

\def\EJC{{\it European Journal of Control}}

\def\AUT{{\it Automatica}}

\def\CSM{{\it IEEE Control Systems Magazine}}

%%%%%%%%%%%%%%%%%%%%%

%%%%%
\usepackage{color}
\newcommand{\bow}[1]{{\color{blue} #1}}

%%%%%%%%%%%%%%%%%%%%%%%

%%%%----------- For comments ------------
\usepackage[prependcaption,colorinlistoftodos]{todonotes}

%%%%%%%%%%%%%%%%%%
%%%%%%%%%%%%%%%%%%%%%%%

\begin{document}

\title{\huge Orbital Stabilization of Nonlinear Systems via the Immersion and Invariance Technique}
\author[1,6]{Romeo Ortega}
\author[2]{Bowen Yi}
\author[3]{Jose Guadalupe Romero}
\author[4,5]{Alessandro Astolfi}
%\author{Romeo Ortega, Bowen Yi, Jose Guadalupe Romero and Alessandro Astolfi\footnote{\tiny R. Ortega is with L2S, CNRS-CentraleSupelec, France. B. Yi is with Department of Automation, Shanghai Jiao Tong University, China. J. G. Romero is with Departamento Acad\'emico de Sistemas Digitales, ITAM, Mexico. A. Astolfi is with Department of Electrical and Electronic Engineering, Imperial College London, UK, and also with Department of Civil Engineering and Computer Science Engineering, University of Rome ``Tor Vergata'', Italy.}}
%\author[*]{Romeo Ortega}
%\author[**]{Bowen Yi}
%\author[$\S$]{Jose Guadalupe Romero}
%\author[$\dag$,$\ddag$]{Alessandro Astolfi}
\affil[1]{\small Laboratoire des Signaux et Syst\'emes, CNRS-CentraleSup\'elec, France}
\affil[2]{\small Australian Centre for Field Robotics, The University of Sydney, Australia}
\affil[3]{\small Departamento Acad\'emico de Sistemas Digitales, ITAM, Mexico}
\affil[4]{\small Department of Electrical and Electronic Engineering, Imperial College London, UK}
\affil[5]{\small Department of Civil Engineering and Computer Science Engineering, University of Rome ``Tor Vergata'', Italy}
\affil[6]{\small Department of Control Systems and Informatics, ITMO University, Russia}

\maketitle
%

%%%%%%%%%%%%%%%%
\begin{abstract}
Immersion and Invariance is a technique for the design of stabilizing and adaptive controllers and state observers for nonlinear systems recently proposed in the literature. In all these applications the problem is translated into  stabilization of equilibrium points. Motivated by some modern applications we show in this paper that the technique can also be used to treat the problem of {\em orbital stabilization}, where the final objective is to generate periodic solutions that are orbitally attractive. The feasibility of our result is illustrated with some classical mechanical engineering examples.
\end{abstract}
%%%%%%%%%%%%%%%%
%
%%%%%%%%%%%%%%%%%%5
\section{Introduction}
\lab{sec1}
%%%%%%%%%%%%%%%%%%
%
To solve the problems of designing stabilizing and adaptive controllers and state observers for nonlinear systems a technique, called Immersion and Invariance (I\&I), was proposed in \cite{ASTORT, ASTetal}. The first step in \II is the definition of a target dynamics, which is a lower dimensional system that captures the desired behavior that is to be imposed to the closed-loop system. In the second step of the design an invariant manifold in the state space of the system, such that the restriction of the system dynamics to this manifold is precisely the target dynamics, is defined. The design is completed defining a control law that renders this manifold attractive. While the second step of the design involves the solution of a partial differential equation (PDE)---corresponding to the Francis-Byrnes-Isidori (FBI) equations \cite{ISI}---the third step is a stabilization problem where it is desired to drive to zero the rest of state, \emph{i.e.}, the off-the-manifold coordinates, while preserving bounded trajectories. As shown in \cite{WANetal}, this latter step can also be translated into a contraction problem.

In all the examples mentioned above one deals with the problem of stabilization of equilibrium points---the desired equilibrium for the system in the stabilization and adaptive control scenarios, or the zero equilibrium for the state estimation error in observer design. In some modern applications---for example, walking robots, DC-to-AC power converters, electric motors and oscillation mechanisms in biology---the final objective is to induce a \emph{periodic orbit}. The main objective of this paper is to show that the \II technique can also be applied to solve this new problem, that is, the generation of attractive periodic solutions. The only modification required is in the definition of the target dynamics that, instead of having an asymptotically stable equilibrium, should be chosen with attractive periodic orbits.

The problem of designing controllers to ensure orbital stabilization has been studied in the literature for various applications and with different approaches. For mechanical systems of co-dimension one, the virtual holonomic constraints (VHC) method has been studied in the last two decades \cite{MAGCONtac,SHIetaltac,WESetal}. As explained in Remark \ref{rem4}, this technique can be viewed as a particular case of the \II approach proposed here. Starting with the pioneering works of \cite{FRAetal,FRAPOG,SPO}, orbital stabilization via \emph{energy regulation} has been intensively studied, mainly for pendular systems, where the basic idea is to pump energy into the system to swing-up the pendulum. Such an idea is further elaborated in \cite{ASTetalauto} as the pumping-and-damping method for the stabilization of the up-right equilibrium of pendular systems, yielding an almost globally asymptotically stable equilibrium. See also \cite{ARCetaltac,DUISTRejc} for more general cases, and \cite{ANDetal} for an interesting connection with chaos theory. In \cite{STASEPtac} the construction of passive oscillators for Lur'e dynamical systems using ``sign-indefinite" feedback static mappings, which is clearly related with the pumping-and-damping method of \cite{ASTetalauto}, has been proposed. A unified treatment of many of these methods has recently been reported in \cite{YIetalscl,YIetal}.

The remainder of the paper is organized as follows. In Section \ref{sec2} we give the problem formulation and present our main result. Section \ref{sec3} presents some examples, including a simple linear time-invariant (LTI) system and two models of mechanical system widely studied in the literature, as well as a power electronics system. The paper is wrapped-up with concluding remarks in Section \ref{sec4}.\\

\noindent {\bf Notation.} $I_n$ is the $n \times n$ identity matrix.  For $x \in \rea^n$, we denote square of the Euclidean norm $|x|^2:=x^\top x$. All mappings are assumed smooth. Given a function $f:  \rea^n \to \rea$ we define the differential operator $\nabla f:=\left(\frac{\displaystyle \partial f }{\displaystyle \partial x}\right)^\top$. Given a set $\mathcal{A} \subset \rea^n$ and a vector $x\in\rea^n$, we denote ${\rm dist}(x,\mathcal{A}) := \inf_{y \in \mathcal{A}}|x-y|$.
%

%%%%%%%%%%%%%%%%%%%%%%%%%%
\section{Problem Formulation and Main Result}
\label{sec2}
%%%%%%%%%%%%%%%%%%%%%%%%%%

%
We are interested in the generation of attractive periodic solutions for the  system
\begequ
\label{sys}
\dot{x} = f(x) + g(x)u,
\endequ
with state $x(t)\in\rea^n$, input $u(t)\in\rea^m$, with  $m < n$, and $g(x)$ full rank. More precisely, we want to define a mapping $v:\rea^n \to \rea^m$ such that the closed-loop system
$$
\dot x=f(x)+g(x)v(x)=:F(x)
$$
has a periodic solution $X: \rea_+ \to \rea^n$ that is orbitally attractive [Definition 8.2]\cite{KHA}. That is $X$ is such that
\begalis{
\dot X(t) & =F(X(t)), \\
X(t) & =X(t+T),\; \forall t \geq 0,
}
and the set defined by the closed orbit
$$
\{x \in \rea^n\;|\; x=X(t),\;0 \leq t \leq T\},
$$
is attractive and invariant.

The main result of the paper is given in the proposition below.

\begin{proposition}
\label{pro1}\rm
Consider the system \eqref{sys}. Assume we can find mappings
\begalis{
\alpha: \rea^p \to \rea^p, \quad \pi: \rea^p \to \rea^n, \quad \phi: \rea^n \to \rea^{n-p}, \quad v :\rea^n\times\rea^{n-p} \to \rea^m
}
with $p < n$, such that the following assumptions hold.
\begin{itemize}
    \item[\bf{A1}] (Target oscillator) The dynamical system
    \begequ
    \label{tardyn}
     \dot{\xi} = \alpha(\xi)
    \endequ
has non-trivial, periodic solutions $\xi_\star (t)=\xi_\star (t+T),\;\forall t \geq 0$, which are parameterized by the initial conditions $\xi(0)$, with $\xi(t)\in\rea^p$.
    \item[\bf{A2}] (Immersion condition) For all $\xi$,
    \begequ
    \label{fbi}
     g^\perp(\pi(\xi)) \left[f(\pi(\xi)) - \nabla \pi^\top(\xi) \alpha(\xi)\right]=0,
    \endequ
    where $g^\perp:\rea^n \to \rea^{n-m}$ is a full-rank left-annihilator of $g(x)$.
    \item[\bf{A3}] (Implicit manifold) The following set identity holds
    \begequ
    \label{impman}
     \calm:=\{x\in\rea^n ~|~ \phi(x)=0\} =
     \{x\in\rea^n ~|~ x=\pi(\xi),\;\xi \in \rea^p\}.
    \endequ
    \item[\bf{A4}] (Attractivity and boundedness) All trajectories of the system
    \begequ
    \begin{aligned}
     \dot{z} & = \nabla \phi^\top(x) [f(x)+g(x)v(x,z)], \\
     \dot{x} & = f(x) + g(x)v(x,z),
    \end{aligned}
    \lab{auxsys}
    \endequ
    with the initial condition $z(0) =  \phi(x(0))$, $z(t)\in \rea^{n-p}$, and the constraint
\begequ
\lab{concon}
v(\pi(\xi),0) = c(\pi(\xi)),
\endequ
where
\begequ
\lab{c}
c(\pi(\xi)) := [g^\top(\pi(\xi))g(\pi(\xi))]^{-1}g^\top(\pi(\xi))\left\{\nabla \pi^\top(\xi) \alpha(\xi) - f(\pi(\xi))\right\},
\endequ
are bounded and satisfy
    \begequ
    \label{ztozer}
     \lim_{t\to \infty} z(t) =0.
    \endequ
\end{itemize}
Then the system
\begequ
\lab{cloloosys}
\dot{x} = f(x) + g(x)v(x,\phi(x))
\endequ
is such that the periodic solution $x_\star (t)=\pi(\xi_\star (t))$ is orbitally attractive.
\end{proposition}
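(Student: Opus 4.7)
The plan is to verify two facts from which orbital attractivity of $x_\star(t)=\pi(\xi_\star(t))$ follows: first, that $\calm$ is invariant under the closed-loop dynamics \eqref{cloloosys} and that the restricted dynamics on $\calm$ is conjugate, via $\pi$, to the target oscillator \eqref{tardyn}; second, that every trajectory of \eqref{cloloosys} is bounded and approaches $\calm$. Combining the two will yield that bounded trajectories converge to the image under $\pi$ of a periodic orbit of \eqref{tardyn}, giving orbital attractivity in the sense of [Definition 8.2]\cite{KHA}.

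For the invariance step I would exploit the interplay between \eqref{concon}--\eqref{c} and the immersion condition A2. The matrix $g(x)[g^\top(x)g(x)]^{-1}g^\top(x)$ is the orthogonal projector onto $\im g(x)$, whose kernel coincides with the image of $g^\perp(x)^\top$. Evaluating at $x=\pi(\xi)$, equation \eqref{fbi} states precisely that the vector $\nabla\pi^\top(\xi)\alpha(\xi)-f(\pi(\xi))$ lies in $\im g(\pi(\xi))$, so the projector acts as the identity on it and the definition \eqref{c} of $c$ yields
$$
f(\pi(\xi))+g(\pi(\xi))c(\pi(\xi))=\nabla\pi^\top(\xi)\alpha(\xi).
$$
In view of the constraint \eqref{concon}, if $\xi_\star(t)$ solves the target oscillator then $x_\star(t):=\pi(\xi_\star(t))$ satisfies $\dot x_\star=\nabla\pi^\top(\xi_\star)\alpha(\xi_\star)=f(x_\star)+g(x_\star)v(x_\star,0)$ with $\phi(x_\star)\equiv 0$ by A3. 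By uniqueness of solutions $x_\star$ is a periodic solution of \eqref{cloloosys} lying in $\calm$, so $\calm$ is invariant and foliated by the orbits produced by A1.

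For the attractivity step I would note that, along any solution $x(t)$ of the closed-loop system \eqref{cloloosys}, setting $z(t):=\phi(x(t))$ gives $\dot z=\nabla\phi^\top(x)[f(x)+g(x)v(x,\phi(x))]$; together with $z(0)=\phi(x(0))$ this precisely matches the auxiliary system \eqref{auxsys}. Thus assumption A4 applies and delivers boundedness of $x(t)$ as well as $\phi(x(t))\to 0$, which via A3 is equivalent to $\dist(x(t),\calm)\to 0$.

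The most delicate point I anticipate is passing from convergence to the invariant set $\calm$ to orbital attractivity of a specific closed orbit, since $\calm$ in general carries a whole family of periodic orbits. I would handle this by a standard $\omega$-limit argument: boundedness of $x(t)$ together with $\dist(x(t),\calm)\to 0$ imply that the $\omega$-limit set of the trajectory is nonempty, compact and contained in $\calm$; being invariant under \eqref{cloloosys}, and since the restriction of \eqref{cloloosys} to $\calm$ is conjugate via $\pi$ to \eqref{tardyn}, this $\omega$-limit set must coincide with the image under $\pi$ of a periodic orbit furnished by A1, which is the conclusion sought.
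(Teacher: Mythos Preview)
Your proposal is correct and follows essentially the same route as the paper: identify $z(t)=\phi(x(t))$ so that {\bf A4} yields boundedness and $z\to 0$, and combine {\bf A2} with \eqref{concon}--\eqref{c} to show that $\calm$ is invariant with the restricted dynamics conjugate to \eqref{tardyn}. The paper's proof is terser on the last step (it asserts $\dist(x(t),\calm_\star)\to 0$ directly rather than via an $\omega$-limit argument), and your caution there is well placed: as Remark~\ref{rem2} of the paper makes explicit, the conclusion is really convergence to \emph{some} member of the family of $\pi$-images of periodic orbits, not to a prescribed one.
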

\begin{proof}
From \eqref{auxsys} with $z(0) =  \phi(x(0))$ we have that $z(t)=\phi(x(t))$ for all $t\ge 0 $. Replacing in \eqref{cloloosys}, and invoking the boudnedness assumption in {\bf A4}  ensures $x(t) \in \call_\infty$. Furthermore, since $\lim_{t\to\infty}z(t) = 0$ we conclude that the set $\calm$ is attractive. Now, \eqref{tardyn}, \eqref{fbi} and \eqref{c} imply
$$
\dot x|_{x=\pi(\xi),u=c(\pi(\xi))}=\dot \pi,
$$
consequently the set $\calm$ is invariant. From {\bf A4} we have that
\begalis{
\lim_{t\to\infty} z(t) =0 \; & \Rightarrow\;  \lim_{t\to\infty}\dist\{x(t),\calm_\star \} = 0,
}
where we have defined the attractive set
$$
 \calm_\star :=\{x\in\rea^n ~|~ x=\pi(\xi),\;\xi \in \Omega\},
$$
with $\Omega:=\{\xi \in \rea^p | \xi (t)=\xi_\star (t), \; 0 \le t \le T\}$. The orbital attractivity property is therefore proved.
\end{proof}

\begrem
\lab{rem1}
It is important to underscore that the only modification introduced to the main stabilization result of I\&I, that is, [Theorem 2.1]\cite{ASTetal}, is in the definition of the target dynamics in {\bf A1}. Instead of having an asymptotically stable equilibrium, now it possesses orbitally attractive periodic orbits.
\endrem

\begrem
\lab{rem2}
Ideally, we would fix a desired periodic trajectory $x_\star (t)=x_\star (t+T)$ and then impose on the mapping $\pi$ the additional constraint that $\xi_\star (t)=\pi^{\tt I}(x_\star (t))$ for all $t\ge 0$, where $\pi^{\tt I}:\rea^n \to \rea^p$ is a left inverse of $\pi$, that is, $\pi^{\tt I}(\pi(\xi))=\xi$. But this is a daunting task---even when the desired trajectory is imposed only on some of the state coordinates. Instead, we select target dynamics that has some periodic orbits, and fix some of the components of the mapping $\pi(\cdot)$ to ensure that the coordinates of interest have the same periodic orbit. Notice also that Proposition \ref{pro1} does not claim that $x$ converges to a \emph{particular} periodic orbit $\pi(\xi_\star)$, but only to (a $\pi$-mapped) one of the family of periodic orbits of the target dynamics, as illustrated in Fig. \ref{fig:I&I}.
\endrem

\begin{figure}
  \centering
  \includegraphics[width=8cm]{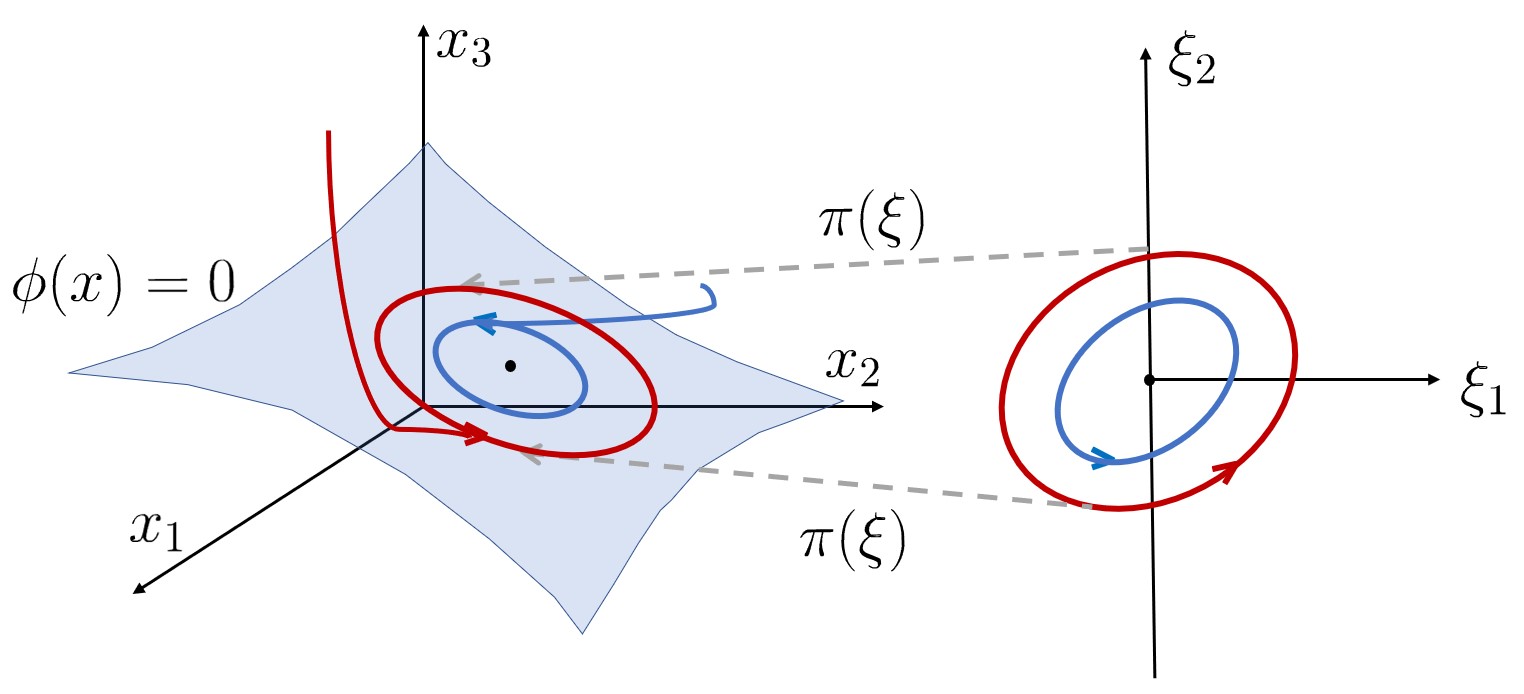}
  \caption{Schematic representation of Remark \ref{rem2}.}\label{fig:I&I}
\end{figure}

\begrem
\lab{rem3}
As indicated in \cite{WANetal}, the constraint condition \eqref{concon} is absent in [Theorem 2.1]\cite{ASTetal}. Also, to reduce the number of mappings to be found, we have expressed the FBI equation \eqref{fbi} projecting it into the null space of the input matrix $g(x)$. As shown in [Propositions 2 and 3]\cite{WANetal}, the stability condition {\bf A4} can be replaced by a contraction condition.
\endrem

\begrem
\lab{rem4}
The VHC method of \cite{MAGCONtac,SHIetaltac} is an alternative technique to induce periodic orbits, which can be viewed as a particular case of the \II design proposed here in the following sense. First of all, in contrast to our design that is applicable to arbitrary nonlinear systems of the form \eqref{sys}, the VHC method has been developed mainly for co-dimension one mechanical systems with $N$ degrees of freedom. However, see \cite{SHIetaltac10} for a recent extension. Second, in VHC the manifold to be rendered invariant, which is \emph{fixed a priori}, has the particular form
$$
\{(q,\dot q)\in \rea^N \times \rea^N\;|\; q_1=\psi_1(\xi),\;q_2=\psi_2(\xi),\dots,q_N=\xi,\;\xi \in \rea\},
$$
with $q$ the generalized coordinates. Therefore, the choice of target dynamics, which corresponds to the zero-dynamics of the system with output $q-\psi(q_1)$, is also restricted.\footnote{See  point 6 of [Section 2.1]\cite{ASTetal} for a discussion on the connection between zero-dynamics and \II.} Thirdly, with the notable exception of \cite{MOHetalaut}, attention has been centered only on rendering the manifold invariant, without addressing the issue of its attractivity, which is the main source of difficulty in \II.
\endrem
%
%%%%%%%%%%%%%%%%%%%%%%%%%%
\section{Examples}
\label{sec3}
%%%%%%%%%%%%%%%%%%%%%%%%%%
%
In this section we present four examples of application of Proposition \ref{pro1}. To illustrate the design procedure, we work out first a rather trivial LTI example. Then, we discuss the orbital stabilization problem for two models of mechanical systems, which have been widely studied in the control literature. Finally, a power electronics example is presented.
\subsection{LTI mechanical system}
\label{subsec31}
%%%%%%%%%%%%%%%%%%%%%%%%%%
%
Consider the LTI system
\begalis{
\dot x_a & = x_b\\
\dot x_b &= - P x_a - R x_b + u,
}
with $x_a(t) \in \rea^2,~x_b(t)\in\rea^2,~u(t)\in \rea^2$, $R\in \rea^{2 \times 2}$ and $P \in \rea^{2 \times 2}$. The control objective is to induce an oscillation of unitary period to the component $x_a$ of the state. Towards this end, we follow step-by-step the procedure proposed in Proposition \ref{pro1}.

For Assumption {\bf A1} we pick $p=2$ and define the target dynamics as the linear oscillator $\dot \xi = J \xi$, where $J:=\begmat{0 &  1 \\ -1 & 0}$. Clearly,
$$
\xi(t)=e^{Jt}\xi(0)=\begmat{\cos t & -\sin t \\ \sin t & \cos t}\xi(0).
$$

It is easy to verify that the FBI equation \eqref{fbi} in Assumption {\bf A2} is satisfied selecting
\begalis{
\pi(\xi) &=T\xi, \;T:=\begmat{I_2 \\ J } \\ c(\pi(\xi)) & =K\pi(\xi),\;\;K:=\begmat{P & R+J }.
}
Also, it is clear that the condition \eqref{impman} in Assumption {\bf A3} holds selecting the mapping
\begalis{
\phi(x) &=x_b - Jx_a.
}
Finally,  Assumption {\bf A4} holds choosing
\begalis{
v(x,z) &=P x_a +(R+J)x_b-z,
}
which satisfies the boundary constraint \eqref{concon} and yields
\begalis{
\dot z & = -  z \\
\dot x_a &= x_b \\
\dot x_b &= J x_b - z.
}
Hence, $x \in \callinf$ and $\lim_{t\to\infty}z(t) = 0$ ensuring that $x$ converges to (a $\pi$-mapped) element of the family of periodic orbits of the target dynamics.

To verify the validity of the claim of the proposition, consider the control
$$
u=v(x,\phi(x))=(P-J)x_a+(R-J-I_2)x_b,
$$
yielding the closed-loop system $\dot x = A_{\tt cl}x$, with
$$
A_{\tt cl}:= \begmat{0 & I_2 \\ J & J - I_2},
$$
the eigenvalues of which are $\{i,-i,-1,-1\}$. The periodic function
$$
X (t):=\pi(\xi (t))=T \xi (t)=\begmat{I_2 \\ J}e^{Jt}\xi(0)
$$
satisfies $\dot X(t) = A_{\tt cl} X(t)$, hence it is a solution of the closed-loop system.
\subsection{Inertia Wheel Pendulum}
\lab{subsec32}
%%%%%%%%%%%%%%%%%%
%
Our next example is the model of the inertia wheel pendulum (IWP) shown in Fig. \ref{fig1}. After a change of coordinates and a scaling of the input, the dynamic equations of the IWP are given by\footnote{All the details of the model can be found in \cite{ORTtac}.}
\begali{
\nonumber
    \dot x_1 &= x_3 \\ \nonumber
    \dot x_2 &=x_4\\ \nonumber
    \dot x_3 &= m \sin(x_1)  - bu \\
    \dot x_4 &= {u},
\lab{inewhe}
}
where $m>0,~b>0$ and $x(t) \in \cals \times \cals \times \rea \times \rea $, with $\cals$ the unit circle. The control objective is to lift the IWP from the \emph{hanging position} and to induce an oscillation of the link with a center at the upward position $x_1=0$.

\begin{figure}[htbp]
  \centering
 \includegraphics[width=4cm]{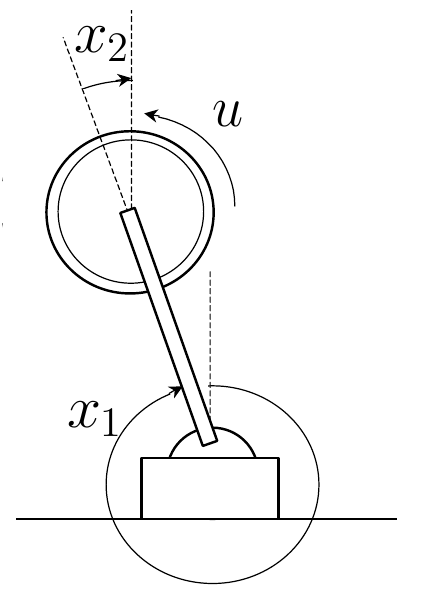}
  \caption{Inertia wheel pendulum}\label{fig1}
\end{figure}

\subsubsection{\II controller design}
\lab{subsubsec321}
%%%%%%%%%%%%%%%%%%
%
We propose a simple undamped pendulum behavior for the target dynamics, \emph{i.e.}, $p=2$, and
\begali{
\nonumber
\dot \xi_1 & = \xi_2 \\
\nonumber
\dot \xi_2 & = - a\sin(\xi_1)
}
with $a$ a constant to be defined. The pendulum has a center at the downright equilibrium if $a>0$ or at the upright one if $a<0$. Consequently, it admits periodic orbits---defined by the level sets of the total energy function
$$
\calh_\xi(\xi):=\hal \xi_2^2 - a \cos(\xi_1),
$$
verifying Assumption  {\bf A1}. Now, motivated by the structure of \eqref{inewhe}, we propose the mapping
\begequ
\lab{piiwp}
\pi(\xi):=\begmat{\pi_1(\xi_1) \\ \pi_2(\xi_1) \\ \pi_1'(\xi_1)\xi_2 \\ \pi_2'(\xi_1)\xi_2 }
\endequ
with $\pi_i(\cdot),\;i=1,2,$ functions to be defined. We note that the first and the second components of the FBI equation \eqref{fbi} is satisfied by construction.

Consider the choice
$$
\pi_1(\xi)=\xi_1, \; \pi_2(\xi_1)=k \xi_1,
$$
with $k$ a constant to be defined. As a result, we get the linear mapping
\begequ
\lab{linmap}
\pi(\xi)=\begmat{1 & 0 \\ k & 0 \\ 0 & 1 \\ 0 & k}\xi=:T \xi.
\endequ
The implicit manifold description in Assumption {\bf A3} is satisfied selecting the linear mapping
\begequ
\lab{impman1}
\phi(x)=\begmat{ -k & 1 & 0 & 0  \\ 0 & 0 & - k & 1}x.
\endequ
After some simple calculations we see that the remaining two components of the FBI equation are solved, for any $k \neq -{1 \over b}$, with the choice
\begequ
\lab{ak}
a:={-m \over 1+bk},
\endequ
and the control
\begequ
\lab{ciwp}
c(\pi(\xi)) = -ak\sin(\xi_1).
\endequ
To complete our design it only remains to verify Assumption  {\bf A4} related to the auxiliary system \eqref{auxsys}. First, we compute the dynamics of the off-the-manifold coordinate $z=\phi(x)$ in closed-loop with the control $u=v(x,z)$ to get
\begalis{
\dot z_1 &= z_2\\
\dot z_2&=-km \sin(x_1) +(1+kb)v(x,z).
}
Let now
$$
v(x,z)= {1 \over 1+kb }\left[- \gamma_1 z_2 -\gamma_2 z_1 + km \sin(x_1)\right],\;\gamma_i>0,\;i=1,2,
$$
which, considering \eqref{ak} and \eqref{ciwp}, satisfies the constraint \eqref{concon}. It yields the closed loop dynamics
\begin{eqnarray*}
\dot z_1&=& z_2 \\
\dot z_2 &=& -\gamma_1 z_2 -\gamma_2 z_1 \\
\dot x_1&=& x_3 \\
\dot x_2&=&  x_4 \\
\dot x_3 &=& -a \sin(x_1)   + \et \\
\dot x_4&=&  {-ak \sin(x_1)+\et},
\end{eqnarray*}
where $\et$ are exponentially decaying terms stemming from the $z$-dynamics, which clearly verifies $\lim_{t\to\infty} z(t) = 0$ exponentially fast.  Now, since $x_1$ and $x_2$ live in the unit circle, and the control $v(x,z)$ is a function of $\sin(x_1)$, these two states are bounded. To complete the proof of boundedness of $x$, we recall the identity
$$
z=
\begin{bmatrix}
-kx_1 + x_2 \\ - kx_3 + x_4
\end{bmatrix},
$$
and consider the change of coordinates $x\mapsto (x_1,x_3,z_1,z_2)$, showing that we only need to check boundedness of $x_3$. Towards this end, we have the following lemma the proof of which, to enhance readability, is given in Appendix A.

\begin{lemma}
\label{lem1}\rm
Consider the nonlinear time-varying system
\begequ
\label{NLTV}
\begin{aligned}
    \dot{x}_1 & = x_3\\
    \dot{x}_3 & = - a\sin{( x_1)} + \et.
\end{aligned}
\endequ
with $(x_1,x_3) \in \mathcal{S}\times \rea$, where $\et$ satisfies
\begequ
\lab{bouepst}
|\et(t)| \le \ell_1 e^{-\ell_2 t},
\endequ
for some $\ell_1>0,~\ell_2 >0$. Then, $x_3(t)$ is bounded for all $t >0$.
\qed
\end{lemma}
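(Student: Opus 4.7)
The natural object to control is the pendulum energy
$$
\calh(x_1,x_3):=\hal x_3^2 - a\cos(x_1),
$$
whose unperturbed level sets are invariant and which already appeared as the conserved quantity of the target oscillator. The plan is to bound $x_3$ by showing that this energy can grow at most by a finite amount under the exponentially decaying disturbance $\et$.

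First I would differentiate $\calh$ along the trajectories of \eqref{NLTV}. The Hamiltonian cross-terms cancel, so
$$
\dot{\calh} = x_3 \et.
$$
Since $\cos(x_1)\in[-1,1]$, the shifted energy $V:=\calh + |a|\ge 0$ satisfies $V\ge \hal x_3^2$, hence $|x_3|\le \sqrt{2V}$. Combining this with \eqref{bouepst} gives the scalar differential inequality
$$
\dot V \le \sqrt{2V}\,\ell_1 e^{-\ell_2 t}.
$$

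The main step is then a comparison argument. On the open set $\{V>0\}$, divide by $2\sqrt{V}$ to obtain
$$
\frac{d}{dt}\sqrt{V(t)} \le \frac{\ell_1}{\sqrt{2}}\, e^{-\ell_2 t},
$$
and integrate from $0$ to $t$, which yields
$$
\sqrt{V(t)} \le \sqrt{V(0)} + \frac{\ell_1}{\sqrt{2}\,\ell_2}\bigl(1-e^{-\ell_2 t}\bigr) \le \sqrt{V(0)} + \frac{\ell_1}{\sqrt{2}\,\ell_2}.
$$
Hence $V(t)$ is uniformly bounded, and therefore so is $x_3(t)=\pm\sqrt{2(V(t)+a\cos(x_1(t))-|a|)}$.

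The only technical nuance is that the division by $\sqrt{V}$ is not valid at instants where $V=0$. This is easily handled by a standard regularization argument: replace $V$ by $V_\delta:=V+\delta$ for $\delta>0$, carry out the same computation (the inequality $\dot V_\delta \le \sqrt{2V_\delta}\,\ell_1 e^{-\ell_2 t}$ still holds), and pass to the limit $\delta\to 0^+$ in the final bound. This removes the single subtlety in the argument; everything else is bookkeeping.
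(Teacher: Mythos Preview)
Your argument is correct, but it is not the route taken in Appendix~A. Both proofs start from the same energy $\calh=\hal x_3^2-a\cos(x_1)$ and the identity $\dot\calh=x_3\et$; they differ in how they estimate $|x_3|$ on the right-hand side. The paper first extracts the crude a~priori bound $|x_3(t)|\le |x_3(0)|+(a+\ell_1)t$ from $|\dot x_3|\le a+\ell_1$, feeds this linear growth into $\dot r\le |x_3||\et|$, and integrates the resulting $(\ell_3+\ell_4 t)e^{-\ell_2 t}$ explicitly. You instead close the estimate through the energy itself via $|x_3|\le\sqrt{2V}$ and run a Bihari-type comparison on $\dot V\le \sqrt{2V}\,\ell_1 e^{-\ell_2 t}$. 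Your route is a bit shorter and yields a cleaner final bound; amusingly, it is exactly the device the paper deploys later, around \eqref{boudotene}, for the second cart--pendulum design. The paper's route has the minor convenience that no division by $\sqrt{V}$ is needed, so the $V=0$ regularization you add is unnecessary there. Either way, the conclusion and the level of difficulty are the same.
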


Finally, as the unperturbed disk dynamics is given by the pendulum equation $\ddot x_1+a \sin(x_1)=0$, it  has a center at the upright equilibrium if $a>0$, or at the downright one if $a<0$. Note from Fig. \ref{fig1} that, unlike the classical pendulum equations, the upright equilibrium corresponds to $x_1=0$. Since the desired objective is to oscillate the link in the upper half plane we impose $a>0$, which translates into the constraint
\begequ
\lab{conk}
k<-{1 \over b},
\endequ
for $k$.

\begrem
\lab{rem5}
Lemma \ref{lem1} proves that the trajectories of an \emph{undamped} pendulum are bounded, in spite of the presence of an exponentially decaying term perturbing its velocity. In spite of the simplicity of the statement, and its obvious practical interest, we have not been able to find a proof of this fact in the literature. Hence, the result is of interest on its own.
\endrem

\subsubsection{Simulation results}
\lab{subsubsec322}
%%%%%%%%%%%%%%%%%%

In this subsection we present some simulations for the IWP model in \eqref{inewhe}, with parameters $m=1.962$, $b=10$, in closed-loop with the proposed controller
$$
v(x,\phi(x))= {1 \over 1+kb }\left[- \gamma_1 (- kx_3 + x_4 )  -\gamma_2 (-kx_1 + x_2 ) + km \sin(x_1)\right],
$$
with $\gamma_1>0,~\gamma_2>0$ and $k$ verifying the constraint \eqref{conk}, which ensures  that the link oscillations are in the upper half plane. We concentrate our attention on  the link, since the disk has a similar behavior.

In Fig. \ref{fig2} we show a plot of $x_1$ vs $x_3$ for $a=0.1308$, (that is, $k=-1.6$), starting with the \emph{link hanging}, at $x(0)=[\pi, {1\over 3}\pi, 0, 0]$, and lifting it to oscillate in the upper half plane. Then, we illustrate the effect of the parameter $k$. In Fig. \ref{fig3} we show the transient behavior of $x_1$ and $x_3$ for $k \in \{-1.4, -1.6, -1.8, -2.0\} $  and the initial condition $x(0)=[{3\over4}\pi, {1\over 3}\pi, 0]$. Third, the effect of the initial conditions is illustrated in Fig. \ref{fig4}, where we have used the following values for the link position $x_1(0)\in \{{1\over 6}\pi, {1\over3}\pi, {2\over3}\pi, {5\over6}\pi\}$ and retained $x_{2}(0)={1\over3}\pi$, $x_{3}(0)=x_4(0)=0$, with the same value of $a=0.1308$. As expected from the analysis of the pendulum dynamics the link oscillates with an amplitude determined by the initial conditions and a frequency increasing when the magnitude of $a$ increases (that is, as $k$ decreases). Finally, to evaluate the effect of the gains $\gamma_1$ and $\gamma_2$ we carry out a simulation with the same initial conditions and gain $k$, but placing the poles of the off-the-manifold coordinate dynamics of the roots of the polynomial
$$
s^2+\gamma_1 s + \gamma_2=(s+p)^2,
$$
 with $p\in \{0.5, 1.0, 2.0, 3.0, 4.0\}$. As shown in Fig. \ref{fig5}, the transient degrades for slower rates of convergence of the off-the-manifold dynamics---as expected.

An animation of the system behavior may be found at \href{https://www.youtube.com/watch?v=Q5W9Kx0QbFo\&t=9s}{\texttt{\bow{www.youtube.com/watch?v=Q5W9Kx0QbFo\&t=9s}}}.

\begin{figure}[htp]
 \center
\includegraphics[width=.5\linewidth]{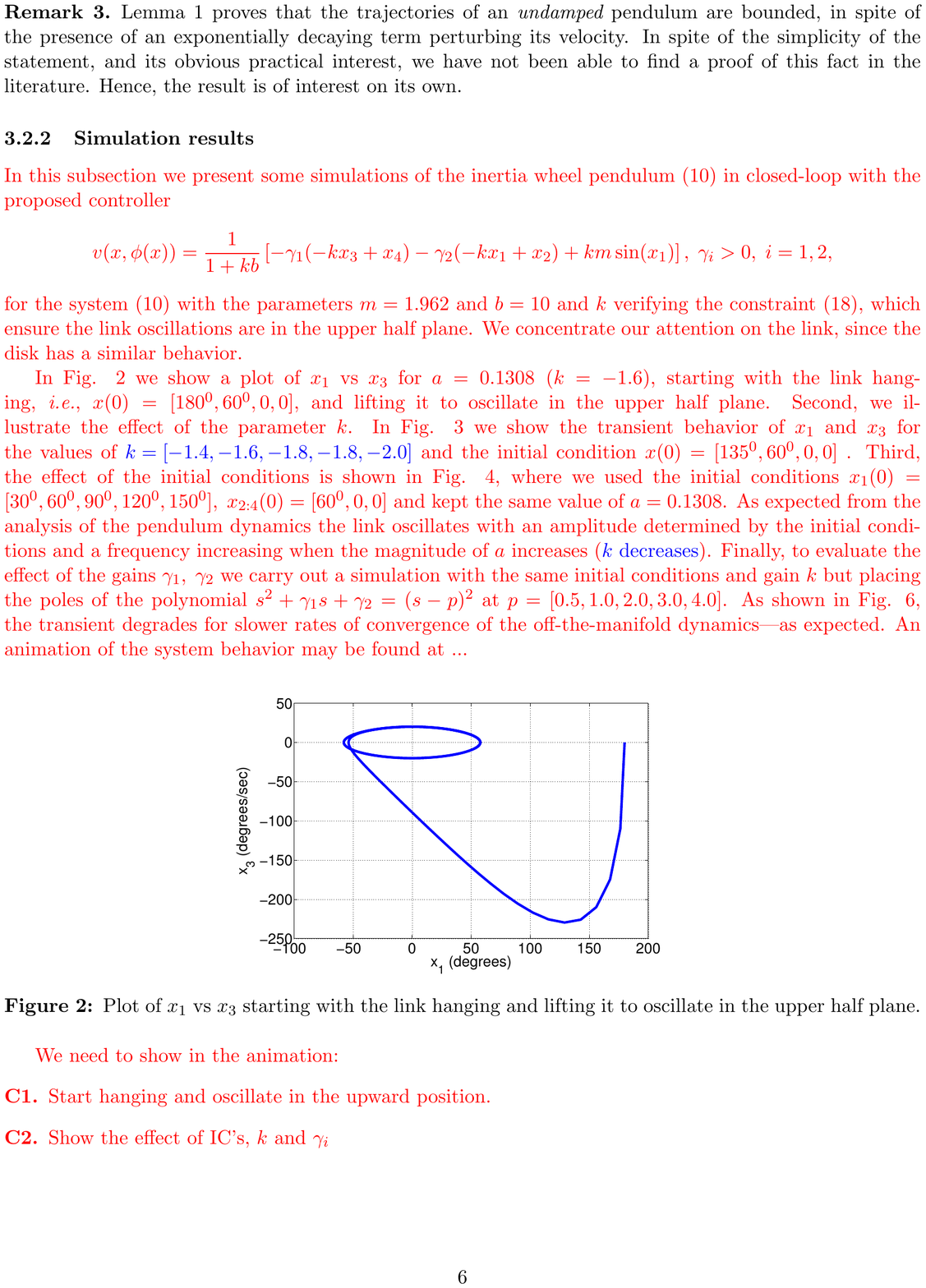}
  \caption{(IWP.) Plot of $x_1$ vs $x_3$ starting with the link hanging and lifting it to oscillate in the upper half plane. }
 \label{fig2}
\end{figure}

\begin{figure}[htp]
 \center
\includegraphics[width=.5\linewidth]{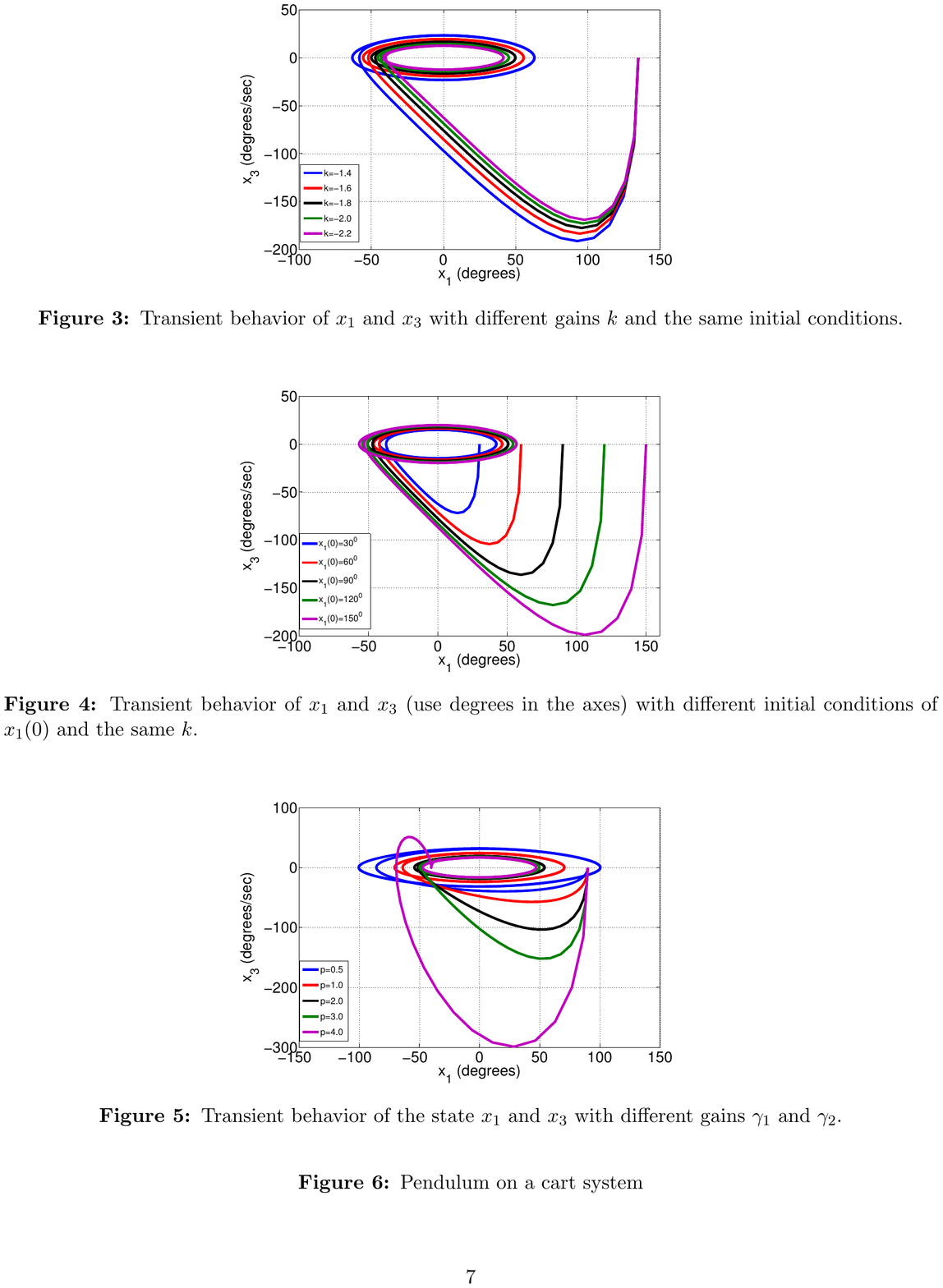}
   \caption{ (IWP.) Transient behavior of $x_1$ and $x_3$ with different gains $k$ and the same initial conditions.}
    \label{fig3}
\end{figure}

\begin{figure}[htp]
 \center
\includegraphics[width=.5\linewidth]{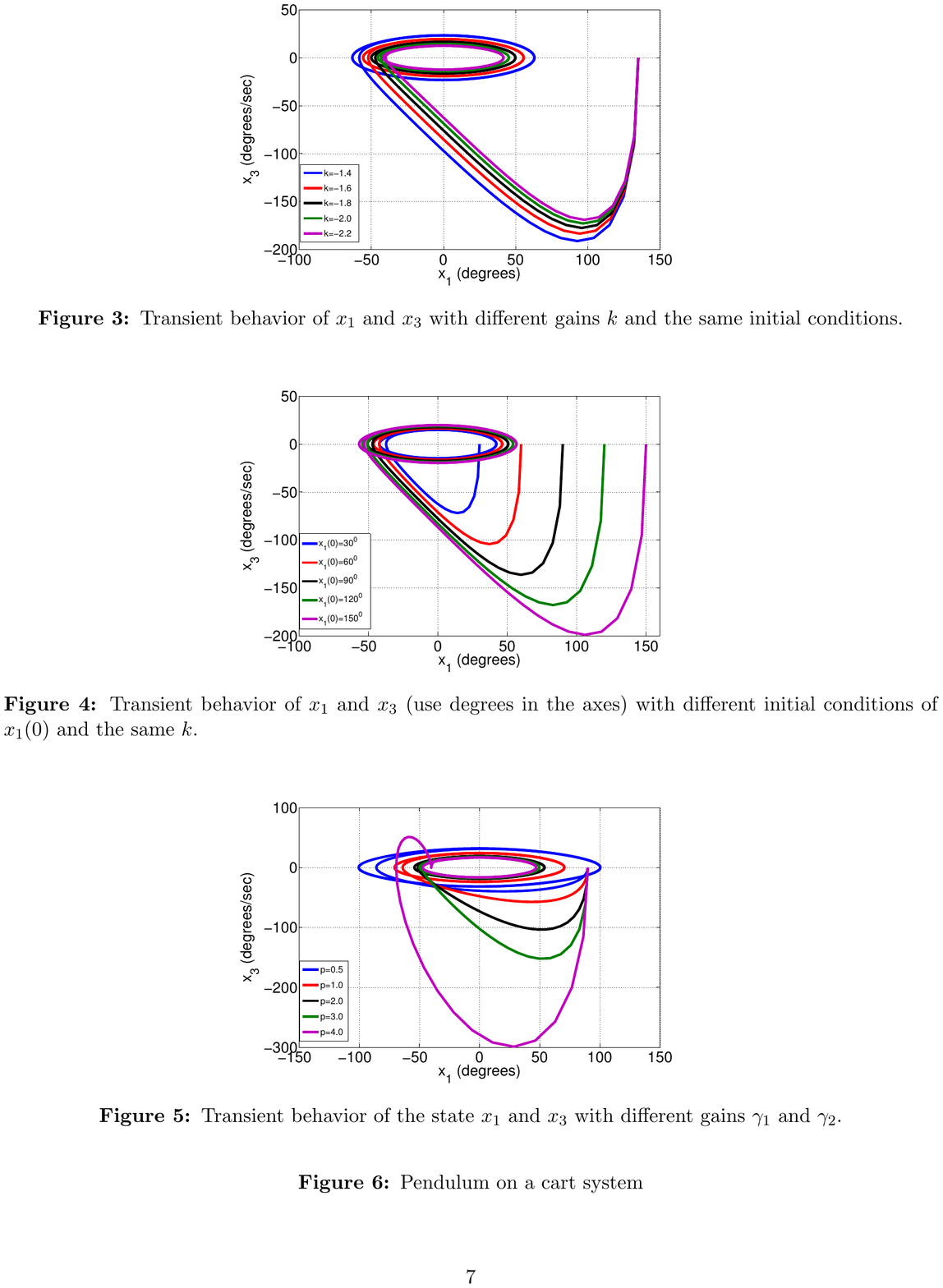}
  \caption{ (IWP.) Transient behavior of $x_1$ and $x_3$ with different initial conditions $x_1(0)$.}
    \label{fig4}
\end{figure}

\begin{figure}[htp]
 \center
\includegraphics[width=.5\linewidth]{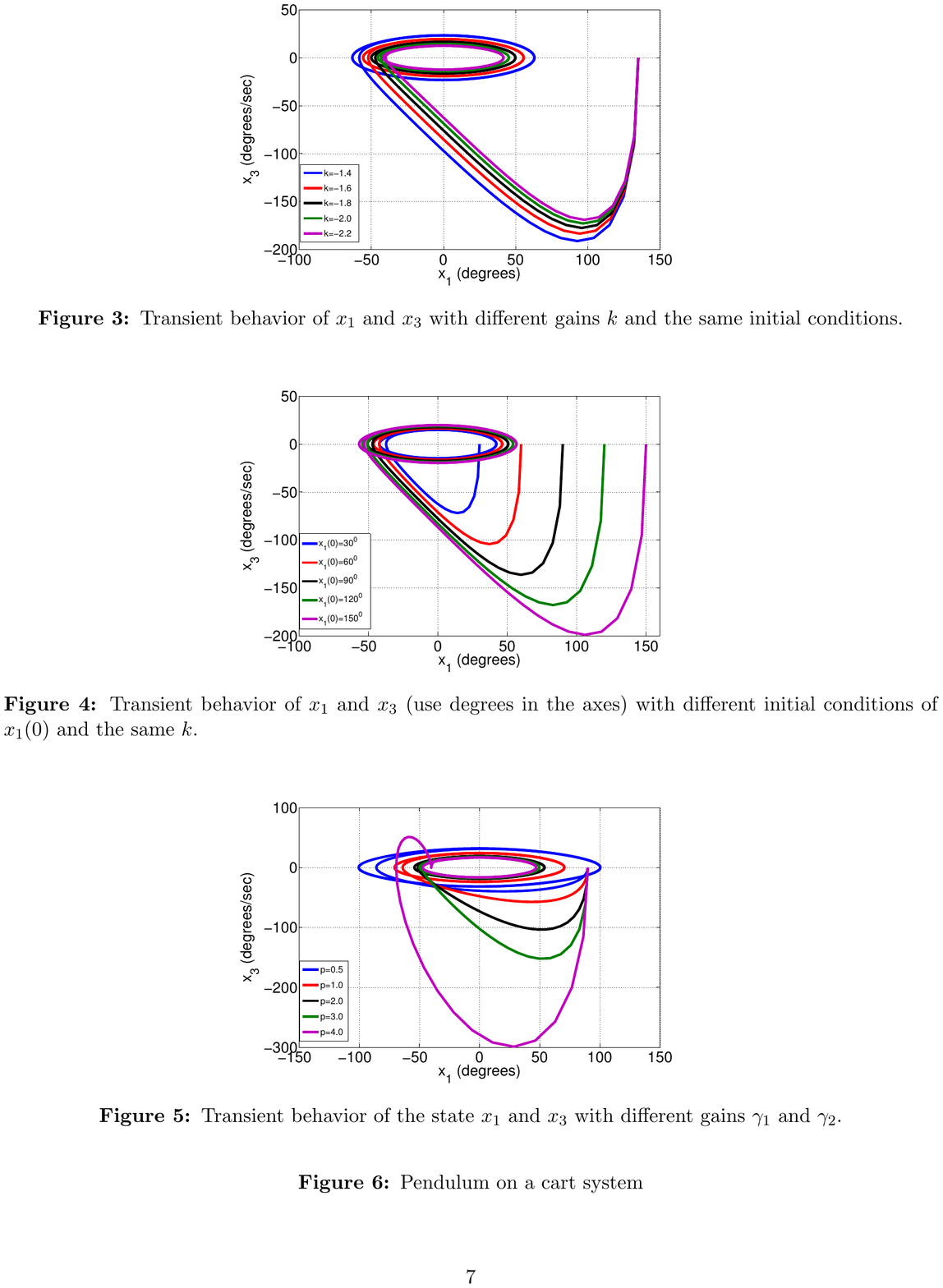}
  \caption{(IWP.) Transient behavior of the state $x_1$ and $x_3$ with different gains $\gamma_1$ and $\gamma_2$. }
    \label{fig5}
\end{figure}

\subsection{Cart-pendulum system}
\label{subsec33}
%%%%%%%%%%%%%%%%%%%%%%%%%%
%
\begin{figure}[htbp]
  \centering
 \includegraphics[width=6cm]{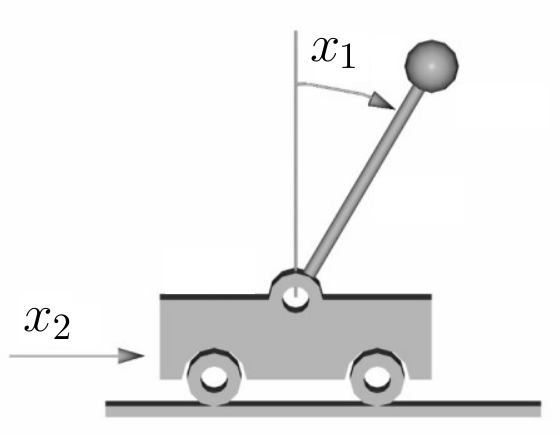}
  \caption{Pendulum on a cart system}
  \label{fig6}
\end{figure}

In this subsection we consider the model of a cart-pendulum system as depicted in Fig. \ref{fig6}. {After a partial feedback linearization and normalization of the dynamical model, we obtain the dynamics}\footnote{See \cite{ACOetal,TEE} for further detail.}
\begali{
\nonumber
    \dot x_1 &= x_3 \\ \nonumber
    \dot x_2 &=x_4\\ \nonumber
    \dot x_3 &= a_1 \sin(x_1)  -  a_2 \cos(x_1)u \\
    \dot x_4 &= u,
\lab{carpen}
}
where {$(x_1(t),x_2(t)) \in \mathcal{S}\times \rea$} are  the pendulum angle with the upright vertical and the cart position, respectively, $x_3(t)\in\rea,~x_4(t)\in \rea$ are their corresponding velocities, $u(t)\in \rea$ is the input, and $a_1>0$ and $a_2>0$ are physical parameters. The control objective is, starting with the link in the \emph{upper-half plane}, to induce an oscillation of the link with a center at the upward position $x_1=0$. Note that, for reasons to be explained below---unlike the IWP---we do not attempt to lift the pendulum from the hanging position.

\subsubsection{Controller design}
\lab{subsubsec331}
%%%%%%%%%%%%%%%%%
%
Similarly to the example in Subsection \ref{subsec32}, we select a two-dimensional target dynamics, \emph{i.e.}, $p=2$. In this case we consider a more general mechanical system of the form
\begali{
\nonumber
\dot \xi_1 & = \xi_2 \\
\lab{tardyn2}
\dot \xi_2 & = \alpha_2(\xi_1),
}
with $\alpha_2(\cdot)$ a function to be defined. The system has a total energy function
$$
\calh_\xi(\xi):=\hal \xi_2^2 +U(\xi_1),
$$
where
$$
U(\xi_1):=- \int_0^{\xi_1} \alpha_2(s)ds
$$
is its potential energy. Since the system is undamped, the derivative of its energy function is zero. Consequently, \emph{if} the potential energy has a minimum at zero, which is implied by the conditions
\begali{
\nonumber
\alpha_2(0) &=0\\
\lab{mincon}
\alpha'_2(0)& < 0,
}
then the target dynamics \eqref{tardyn2} admits periodic orbits---defined by the level sets of $\calh_\xi(\xi)$, and verifies Assumption  {\bf A1}.

We propose the mapping \eqref{piiwp}, with $\pi_i(\cdot),\;i=1,2,$ functions to be defined. From the third and the fourth components of the FBI equation \eqref{fbi} of Assumption  {\bf A2} we see that these functions must satisfy
\begequ
\lab{pdecarpen}
 a_1 \sin(\pi_1(\xi_1)) - a_2 \cos(\pi_1(\xi_1)) \bigg[ \pi_2''(\xi_1)\xi_2^2 + \pi_2'(\xi_1)\alpha_2(\xi_1) \bigg] = \pi_1''(\xi_1)\xi_2 + \pi_1'(\xi_1)\alpha_2(\xi_1).
\endequ
Factoring the elements depending on $\xi_2$ we conclude that $ \pi_1''(\xi_1)= \pi_2''(\xi_1)=0$, which implies that these functions should be \emph{linear}. Therefore, we select the mapping \eqref{linmap}, with $k$ a constant to be defined. The implicit manifold description of Assumption  {\bf A3} is satisfied with the linear mapping \eqref{impman1}.

Replacing the expressions of \eqref{linmap} in \eqref{pdecarpen} we obtain
\begali{
\alpha_2(\xi_1) & = {a_1 \sin(\xi_1) \over 1+ka_2 \cos(\xi_1)},
\nonumber
}
while the control must be chosen such that
\begequ
\lab{ccarpen}
c(\pi(\xi))={ka_1 \sin(\xi_1) \over 1+ka_2 \cos(\xi_1)}.
\endequ

To ensure that the potential energy has a minimum at zero we must verify the conditions \eqref{mincon}. Hence, we compute
$$
\alpha'_2(0)={a_1 \over 1+ka_2},
$$
and we must impose on $k$ the constraint
\begequ
\lab{stacon}
-{1 \over a_2} > k.
\endequ
With this choice, singularities are avoided in the interval $\cos(\xi_1) >  -{1 \over k a_2}$, which  contains the origin. Note that  the interval above is, unfortunately, strictly contained in the upper-half plane and controller singularities may appear during the transient---stymying the possibility to lift the pendulum for the lower-half plane and making local our stability result.

To complete our design it only remains to verify Assumption  {\bf A4} related to the auxiliary system \eqref{auxsys}. First, we compute the dynamics of the off-the-manifold coordinate $z=\phi(x)$ in closed-loop with the control $u=v(x,z)$ to get
\begalis{
\dot z_1 &= z_2\\
\dot z_2&=-ka_1 \sin(x_1) +[1+ka_2 \cos(x_1)]v(x,z).
}
Let the control law be
\begequ
\label{control_cartpen}
v(x,z)=  {1 \over 1+ka_2 \cos(x_1)}\left[- \gamma_1 z_2 -\gamma_2 z_1 + ka_1 \sin(x_1)\right],\;\gamma_i>0,\;i=1,2,
\endequ
which, considering \eqref{ccarpen}, satisfies the constraint \eqref{concon}.  It yields the closed loop dynamics
\begin{eqnarray*}
\dot z_1&=& z_2 \\
\dot z_2 &=& -\gamma_1 z_2 -\gamma_2 z_1 \\
\dot x_1&=& x_3 \\
\dot x_2&=&  x_4 \\
\dot x_3 &=&  {a_1 \sin(x_1) - a_2 \cos(x_1) \et \over 1+ka_2 \cos(x_1)}  \\
\dot x_4&=&  {a_1 \sin(x_1)+\et \over 1+ka_2 \cos(x_1)} ,
\end{eqnarray*}
which is such that $\lim_{t\to\infty} z(t) = 0$.  Now, since $x_1$ lives in the unit circle, and the control $v(x,z)$ is a function of $\sin(x_1)$ and $\cos(x_1)$, this state is bounded. Similarly to the inertia wheel pendulum example, we only need to verify boundedness of $x_3$. For, we have the following lemma, the proof of which is given in Appendix \ref{appb}.

\begin{lemma}
\label{lem2}\rm
Consider the nonlinear time-varying system
\begequ
\label{lem2sys1}
\begin{aligned}
\dot{w}_1 & = w_2\\
\dot{w}_2 & = {  { a_1 \sin(w_1) + \et} \over 1+ka_2 \cos(w_1)}
\end{aligned}
\endequ
with $(w_1(t),w_3(t)) \in \mathcal{S}\times \rea$, $a_1,a_2 >0$, $k$ verifying \eqref{stacon} and $\et$ satisfying \eqref{bouepst}. If the initial state satisfies
\begequ
\nonumber
{w_1(0)} \in (-\beta_\star, \beta_\star)
\endequ
with
$$
\beta_\star:= \arccos \left(-{1\over ka_2}\right),
$$
then, there exists $\ell_2^{\min} >0$ such that
$$
\ell_2 \ge \ell_2^{\min}
\quad \Longrightarrow \quad
w_1(t) \in (-\beta_\star,\beta_\star)\;\mbox{and}\;|w_3(t)| \leq M.
$$
\qed
\end{lemma}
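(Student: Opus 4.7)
The plan is to control the trajectory via an energy function for the unperturbed system and confine it within a sub-level set that is strictly separated from the singularity of the denominator.

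First, I would introduce the potential
$$
U(w_1) := -\int_0^{w_1} \frac{a_1 \sin s}{1 + k a_2 \cos s}\,ds, \qquad H(w_1,w_2) := \tfrac{1}{2} w_2^2 + U(w_1),
$$
observing that $U$ is defined on $(-\beta_\star,\beta_\star)$ and can be written as $\tfrac{a_1}{ka_2}\ln\bigl|\tfrac{1+ka_2\cos w_1}{1+ka_2}\bigr|$. Under the standing condition $k < -1/a_2$, the prefactor $a_1/(ka_2)$ is negative and $1+ka_2\cos w_1$ vanishes precisely at $|w_1|=\beta_\star$, so $U$ has a strict minimum $U(0)=0$ (a direct computation gives $U''(0) = -a_1/(1+ka_2) > 0$) and $U(w_1) \to +\infty$ as $|w_1| \to \beta_\star$. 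Consequently, for every $c > 0$ the sub-level set $\Omega_c := \{H \le c\}$ is compact and strictly contained in $(-\beta_\star,\beta_\star)\times\mathbb{R}$, and there exists $D_c > 0$ such that $|1+ka_2\cos w_1| \ge 1/D_c$ on $\Omega_c$.

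Next, I would differentiate $H$ along solutions. Since $U'(w_1) = -\alpha_2(w_1)$ with $\alpha_2$ the drift of the unperturbed target system, only the perturbation survives and
$$
\dot H = \frac{w_2\,\varepsilon_t}{1+ka_2\cos w_1}.
$$
As long as the trajectory remains inside $\Omega_c$, using $|w_2| \le \sqrt{2H}$ together with the hypothesis on $\varepsilon_t$ yields $\dot H \le D_c\sqrt{2H(t)}\,\ell_1 e^{-\ell_2 t}$, and the comparison lemma applied to $\sqrt{H}$ gives $\sqrt{H(t)} \le \sqrt{H(0)} + D_c \ell_1 / (\sqrt{2}\,\ell_2)$.

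Finally, a standard continuity-bootstrap closes the argument. Pick $c^\star$ with $H(w(0)) < c^\star$ and $\Omega_{c^\star} \subset (-\beta_\star,\beta_\star)\times\mathbb{R}$, which is possible because $w_1(0)\in(-\beta_\star,\beta_\star)$ and $U$ blows up at the barrier. Let $T^\star$ be the first exit time from $\Omega_{c^\star}$; on $[0,T^\star)$ the bound above holds with $D_c = D_{c^\star}$, and choosing $\ell_2^{\min} := D_{c^\star}\ell_1 / \bigl(\sqrt{2}(\sqrt{c^\star}-\sqrt{H(0)})\bigr)$ forces $H(t) < c^\star$ on $[0,T^\star)$ whenever $\ell_2 \ge \ell_2^{\min}$, so that $T^\star = +\infty$. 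This traps the trajectory in $\Omega_{c^\star}$, delivering both $w_1(t) \in (-\beta_\star,\beta_\star)$ and $|w_2(t)| \le M := \sqrt{2c^\star}$. The main obstacle is the singularity of $1/(1+ka_2\cos w_1)$ at the boundary: a crude estimate on $\dot H$ is useless because this factor is unbounded there, and the argument only closes because the logarithmic blow-up of $U$ both caps the bad factor on every sub-level set by a finite $D_c$ and leaves an a priori gap $\sqrt{c^\star}-\sqrt{H(0)} > 0$ that the exponentially decaying perturbation cannot close once $\ell_2$ is large enough.
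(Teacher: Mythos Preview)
Your argument is correct and shares the paper's starting point---the same energy-like function $H$ (up to an additive constant) and the identical computation $\dot H = w_2\varepsilon_t/(1+ka_2\cos w_1)$---but the way you control the singular factor is genuinely different and, in fact, more elementary. The paper exploits the explicit inverse relation $|1+ka_2\cos w_1|\ge\exp\{(ka_2/a_1)\calh_w\}$ to obtain a \emph{global} differential inequality on the whole strip, which then forces a two-stage comparison: first a scalar ODE $\dot r=\ell_1\sqrt{2(r+\calh_w^{\min})}\exp\{-(ka_2/a_1)r\}e^{-\ell_2 t}$, then a further reduction (via a case split on whether $r$ crosses a threshold $r_0$) to a separable ODE $\dot v=\ell_1 e^{k_0 v}e^{-\ell_2 t}$ that is solved in closed form to extract $\ell_2^{\min}$. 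Your route sidesteps all of this by localizing a priori to a sublevel set $\Omega_{c^\star}$, where the denominator is bounded by a \emph{constant} $D_{c^\star}$; the resulting inequality $\dot H\le D_{c^\star}\sqrt{2H}\,\ell_1 e^{-\ell_2 t}$ integrates in one line and the bootstrap closes with a continuity argument. What your approach buys is simplicity and robustness (no explicit ODE solution, no case analysis); what the paper's approach buys is an explicit, closed-form expression for $\ell_2^{\min}$ in terms of the system constants and the initial energy, whereas your $\ell_2^{\min}$ depends on the implicitly defined constant $D_{c^\star}$ and the chosen gap $\sqrt{c^\star}-\sqrt{H(0)}$.
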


To complete the proof we note that, with a suitable definition of $\et$, the right-hand side of $\dot x_3$ may be written in the form \eqref{lem2sys1} and observing that the exponential decay ratio of the $z$ dynamics---and consequently the parameter $\ell_2$---can be made arbitrarily large with a suitable selection of the gains $\gamma_1$ and $\gamma_2$.

\begin{remark} \rm
\lab{rem6}
As indicated in Lemma 2 stability of the closed-loop system is only established for large gains $\gamma_i>0$ $(i=1,2)$, that ensure a sufficiently fast convergence to the invariant manifold. Interestingly, although this requirement is imposed by the stability proof, we have not been able to observe instability in our simulations even for extremely small gains.
\end{remark}

\subsubsection{An alternative controller design}
\label{subsubsec332}

To enlarge the domain of attraction of the periodic orbit and remove the restriction of using high gains explained in Remark \ref{rem6},  we propose in this subsection an alternative controller design. For, we take the \emph{nonlinear} mapping
\begequ
\nonumber
%\label{NL_pi2}
\pi(\xi) =
\begmat{
\xi_1 \\ k(\xi_1)  \\ \xi_2 \\ k'(\xi_1)\xi_2
}.
\endequ
The implicit manifold description of Assumption  {\bf A3} is satisfied selecting the mapping
\begequ
\lab{impman2}
\phi(x)=\begmat{ x_2 - k(x_1) \\ x_4 - k'(x_1)x_3}.
\endequ
Some simple calculations prove that the FBI equations of Assumption  {\bf A2} are solved selecting the controller
\begequ
\label{a_pi_333}
c(\pi(\xi)) =
{ k''(\xi_1)\xi_2^2 + a_1 k'(\xi_1)\sin(\xi_1)
\over
1 + a_2 k'(\xi_1) \cos(\xi_1)
} ,
\endequ
together with the target dynamics
\begequ
\label{target333}
\begin{aligned}
 \dot{\xi}_1 & = \xi_2,\\
 \dot{\xi}_2 & = \rho(\xi_1) + \beta(\xi_1)\xi_2^2,
\end{aligned}
\endequ
where
\begalis{
 \rho(\xi_1) &:=
 { a_1\sin(\xi_1) \over 1 + a_2k'(\xi_1)\cos(\xi_1)}, \\
 \beta(\xi_1) & := -
 { a_2k''(\xi_1)\cos(\xi_1) \over 1 + a_2k'(\xi_1)\cos(\xi_1)}.
}
To enlarge the range of $x_1$ for which singularities are avoided we propose to select $k(\cdot)$, such that the denominator of the control \eqref{a_pi_333} is \emph{constant}, that is as the solution of the ordinary differential equation
\begin{equation}
\label{ode333}
1 + a_2 k'(s)\cos(s) = -a,
\end{equation}
with $a$ a constant to be defined. The solution of \eqref{ode333} is given by
\begequ
\label{k333}
k(s) = - {1+a \over a_2} \ln\bigg({1+\sin(s) \over \cos(s)}\bigg) + a_0,
\endequ
where we have added a constant $a_0$ that allows setting the center of the cart at any desired position. Notice that the function $k(\cdot)$ is well-defined in the interval $(-{\pi\over2},{\pi\over2})$. With this choice of $k(\xi_1)$, the functions $\rho(\xi_1)$ and $\beta(\xi_1)$ become
\begin{equation}
\label{alpbeta333}
\rho(\xi_1) =  -{a_1\over a}\sin(\xi_1), \quad \beta(\xi_1) = - {1+a\over a}\tan(\xi_1).
\end{equation}
Now, the target dynamics \eqref{target333}, is an undamped mechanical system with total energy function
\begin{equation}
\label{E333}
\calh_\xi(\xi) = {m(\xi_1)\over 2}\xi_2^2 + U(\xi_1),
\end{equation}
inertia
\begequ
\lab{mcarpen}
m(\xi_1) :=  \exp\left\{ {2} \int_0^{\xi_1} {(1+a)\over a} \tan(s) ds \right\}
=\Big|\cos(\xi_1) \Big|^{-2(1+{1\over a})}
\endequ
and potential energy
$$
U(\xi_1) := {a_1 \over a}\int_{0}^{\xi_1}  \sin(s) m(s) ds
= {a_1\over a+2} \cos(\xi_1)^{-(1+{2\over a})}
$$
for $\cos(\xi_1) >0$. From \eqref{mcarpen} we conclude that there {exist} constants $m_{\min}$ and $m_{\max}$ such that
$$
0 < m_{\min} \leq m(s) \leq m_{\max},\;\forall s \in \bigg(-{\pi\over2},{\pi\over2}\bigg).
$$
It is obvious that, with $a>0$, the potential energy $U(\xi_1)$ has a minimum at zero, ensuring Assumption  {\bf A1}.
To verify Assumption  {\bf A4} we define from \eqref{impman2} the off-the-manifold coordinates
\begin{equation}
\nonumber
%\label{z332}
\begin{aligned}
z_1 & = x_2 - k(x_1), \\
z_2 & = x_4 - k'(x_1)x_3,
\end{aligned}
\end{equation}
the dynamics of which are
$$
\begin{aligned}
 \dot{z}_1 & = z_2\\
 \dot{z}_2 & = \big[ 1+ a_2 k'(x_1)\cos(x_1)\big] u
 - \big[ k''(x_1)x_3^2 + a_1k'(x_1)\sin(x_1) \big]\\
 & = -a u - \big[ k''(x_1)x_3^2 + a_1k'(x_1)\sin(x_1) \big],
\end{aligned}
$$
where we have used \eqref{ode333} to get the second identity. We design the feedback law as
$$
v(x,z) = -
{1\over a} \bigg(k''(x_1)x_3^2 + a_1k'(x_1)\sin(x_1) - \gamma_1 z_1 - \gamma_2 z_2 \bigg),
$$
which satisfies \eqref{concon} and ensures $z(t) \to 0$ exponentially fast.

Similarly to the analysis of the previous subsection, we only need to prove the boundedness of the subsystem $x_1,x_3$ in closed-loop with the control given above, which is given by
\begequ
\label{timevarying333}
\begin{aligned}
 \dot{x}_1 & = x_3\\
 \dot{x}_3 & = \rho(x_1) + \beta(x_1)x_3^2 - {a_2\over a}\cos(x_1)(\gamma_1 z_1 + \gamma_2z_2).
\end{aligned}
\endequ
Computing the derivative of the energy function $\calh_\xi(x_1,x_3)$, defined in \eqref{E333}, along the trajectories of \eqref{timevarying333} we get that
\begalis{
\dot \calh_\xi & = -m(x_1)x_3 {a_2\over a}\cos(x_1)(\gamma_1z_1+\gamma_2z_2) \\
& \le  {m_{\max}a_2 (\gamma_1 + \gamma_2)\over a}|x_3||z(0)|\exp(-\ell_2t),\;x_1 \in \bigg(0,{\pi\over2}\bigg).
}
Now, from the fact that
$$
U(x_1) \ge U(0) =0,\quad x_1 \in \bigg(-{\pi \over 2}, {\pi \over 2}\bigg),
$$
we obtain the following inequality
$$
|x_3| \le \sqrt{{2\over m_{\min}} \calh_\xi(x)},
$$
from which we obtain the bound
\begequ
\lab{boudotene}
\dot \calh_\xi \le \ell_3 \sqrt{\calh_\xi(x)} \exp(-\ell_2t),
\endequ
where we have used the definition
$$
\ell_3:= {m_{\max}a_2|z(0)| (\gamma_1 + \gamma_2)\over a}\sqrt{2\over m_{\min}}.
$$
Finally, consider the auxiliary dynamics
$$
\dot{p}  = \ell_3 \sqrt{p}  \exp(-\ell_2 t),
$$
with $p(0)\ge0$, the solution of which is
$$
\sqrt{p(t)} =
{\ell_3\over 2\ell_2}(1- \exp(-\ell_2t)) + \sqrt{p(0)}.
$$
Clearly, $p(t)$ is bounded thus, applying the Comparison Lemma \cite{KHA} to \eqref{boudotene}, we conclude that $\calh_\xi(x(t))$, and consequently $x_1$ and $x_3$, are bounded.

\begin{remark} \rm
\lab{rem7}
The main advantage of the controller proposed in this subsection is that the pendulum can now move in the \emph{whole} upper-half plane. Another advantage is that stability is ensured for all gains $\gamma_1,\gamma_2>0$---this is in contrast with the controller of the previous subsection as indicated in Remark \ref{rem6}. Of course, the prize that is paid for these goodies is a significant increase in the controller complexity.
\end{remark}

\subsubsection{Simulation results}
\lab{subsubsec333}
%%%%%%%%%%%%%%%%%%
%
In this subsection we first present some simulations of the cart-pendulum system \eqref{carpen} with $a_1=9.8$ and $a_2=1$, in closed-loop with the controller proposed in Subsection \ref{subsubsec331}, namely
$$
v(x,\phi(x))=
{1 \over 1+ka_2 \cos(x_1)}\left[- \gamma_1 (- kx_3 + x_4 )  -\gamma_2 (-kx_1 + x_2 ) + km \sin(x_1)\right],
$$
with $\gamma_1>0,~\gamma_2>0$ and $k$ verifying the constraint \eqref{stacon}.

In Fig. \ref{fig:cart-p1} we show a plot of $x_1$ vs $x_3$ for $k=-4$ and $\gamma_1=\gamma_2=2$, with initial conditions $x(0) = [{1\over5}\pi, 0, {1\over10}\pi,0]$. Note that a non-zero initial velocity is assumed for the link. The effect of the parameter $k$ is illustrated in Fig. \ref{fig:cart-p3}, with the values of $k \in \{-3,-4,-6\}$ and the same initial condition as before. {As shown in the figure, the parameter $k$ affects the period of the oscillation in a direct manner.} Fig. \ref{fig:cart-p4} illustrates the effect of the gains $\gamma_1$ and $\gamma_2$.

\begin{figure}[htp]
 \center
\includegraphics[width=.5\linewidth]{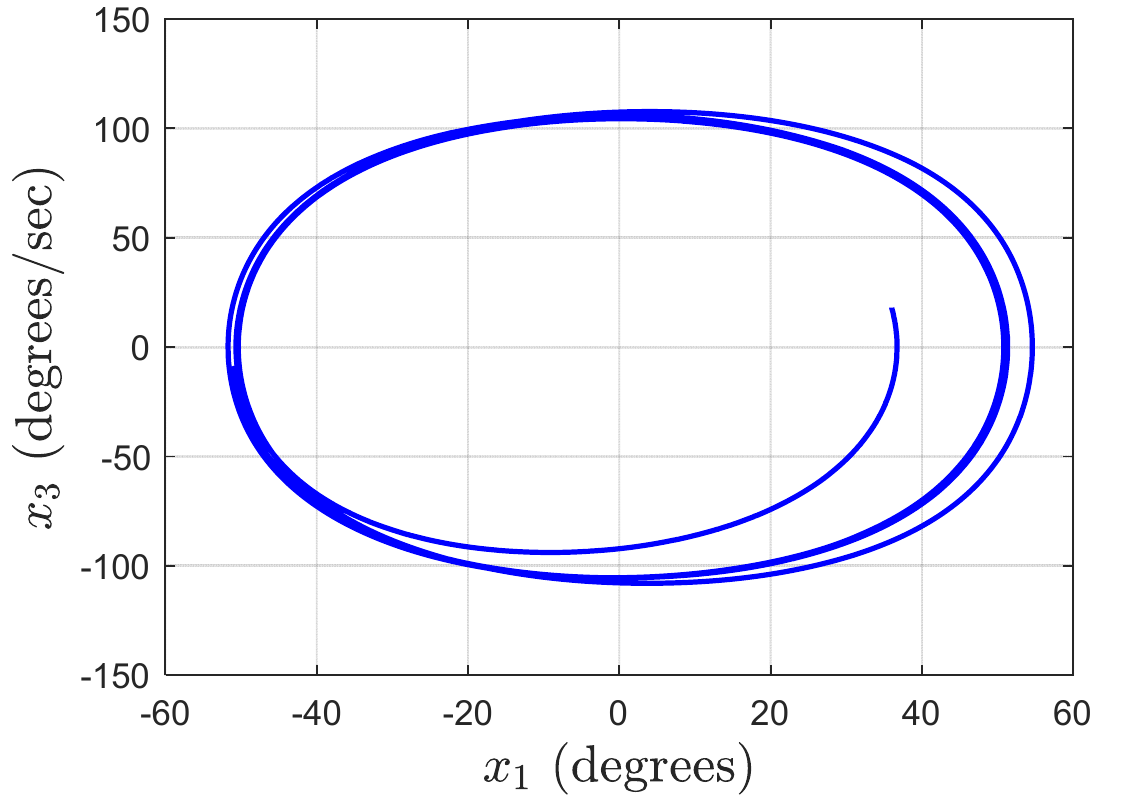}
  \caption{ (Cart pendulum.) Plot of $x_1$ vs $x_3$ starting with the link in the upper-half plane and a non-zero velocity.}
 \label{fig:cart-p1}
\end{figure}

\begin{figure}[htp]
 \center
\includegraphics[width=.5\linewidth]{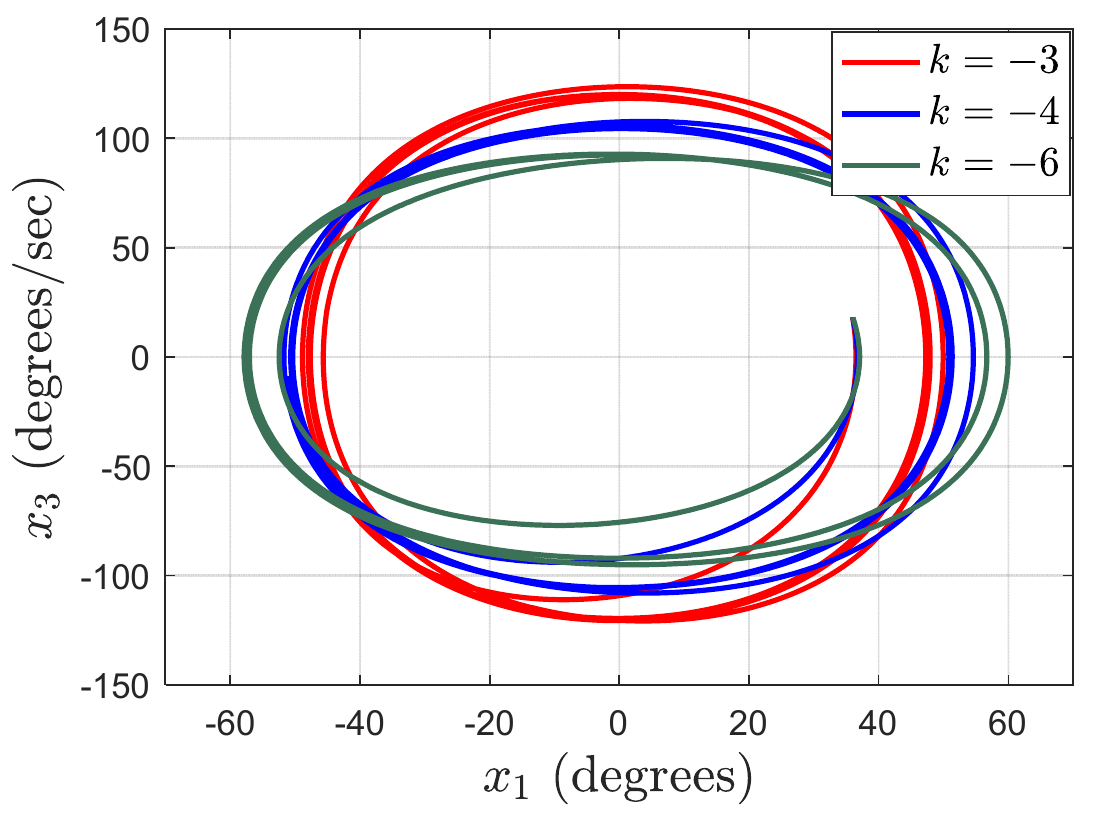}
  \caption{(Cart pendulum.) Transient behavior of the state $x_1$ and $x_3$ with different gains $k$ and the same initial conditions.}
 \label{fig:cart-p3}
\end{figure}

\begin{figure}[htp]
 \center
\includegraphics[width=.5\linewidth]{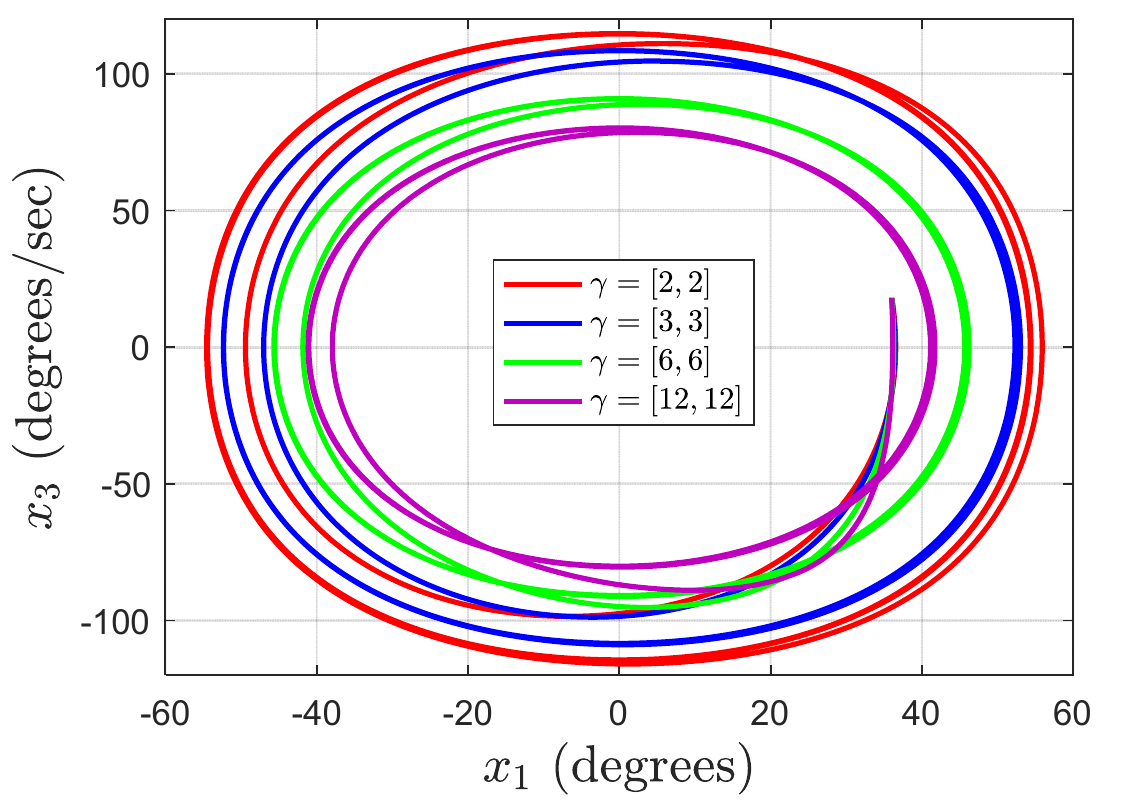}
  \caption{(Cart pendulum.) Transient behavior of the state $x_1$ and $x_3$ with different gains $\gamma_1$ and $\gamma_2$ and the same initial conditions.}
 \label{fig:cart-p4}
\end{figure}

{
We now give simulation results for the second design for the cart-pendulum system, that is, the controller
$$
v(x,\phi(x)) =
-
{1\over a} \bigg(k''(x_1)x_3^2 + a_1k'(x_1)\sin(x_1) - \gamma_2(x_2 - k(x_1)) - \gamma_1 (x_4 - k'(x_1)x_3) \bigg),
$$
with $k(x_1)$ given by \eqref{k333}. Fig. \ref{fig:cart-p5} displays the plot of $x_1$ vs $x_3$ for $a=2,a_0=0$ and $\gamma_1=\gamma_2 =1$, starting with the link \emph{closer to the horizontal} position and without any initial velocity, \emph{i.e.}, $x(0)=[{3\over10}\pi,-{1\over36}\pi,0,0]$. The effect of the parameter $a$ is illustrated in Fig. \ref{fig:cart-p6}.
}

 \begin{figure}[htp]
 \center
\includegraphics[width=.5\linewidth]{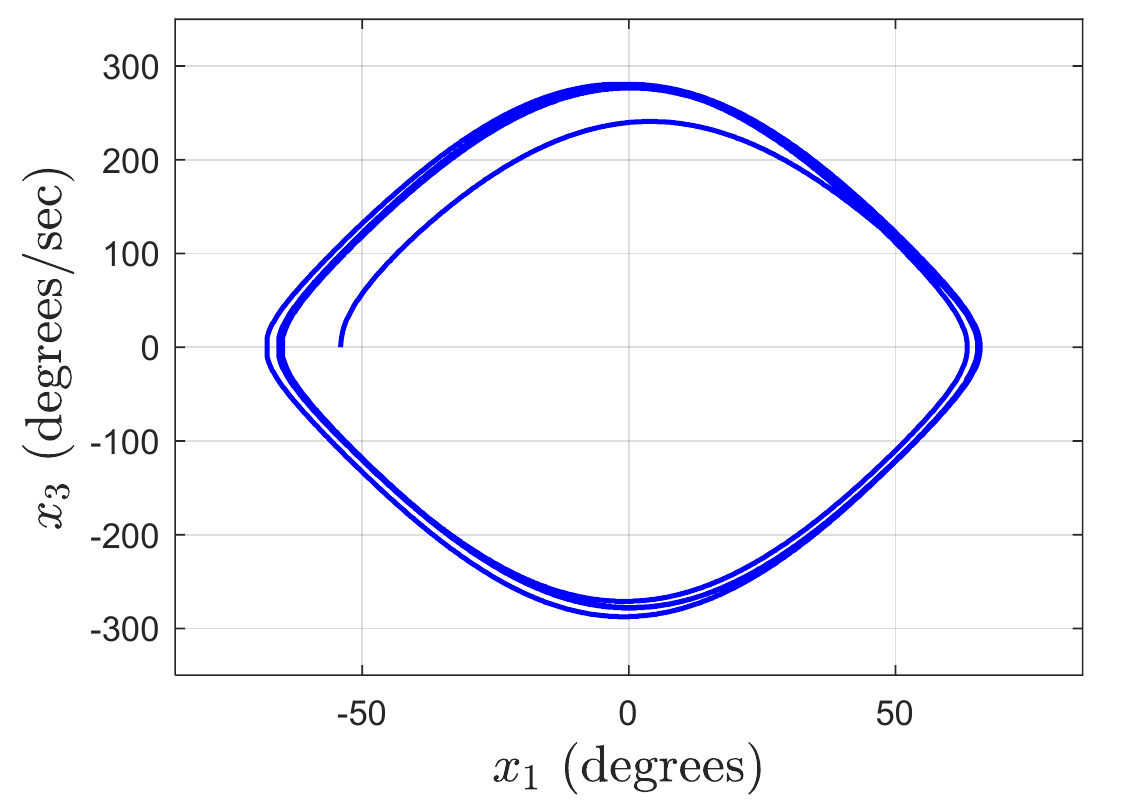}
  \caption{(Cart pendulum, the second controller.) Plot of $x_1$ vs $x_3$ starting with the link in the upper-half plane and zero velocity.}
 \label{fig:cart-p5}
\end{figure}

\begin{figure}[htp]
 \center
\includegraphics[width=.5\linewidth]{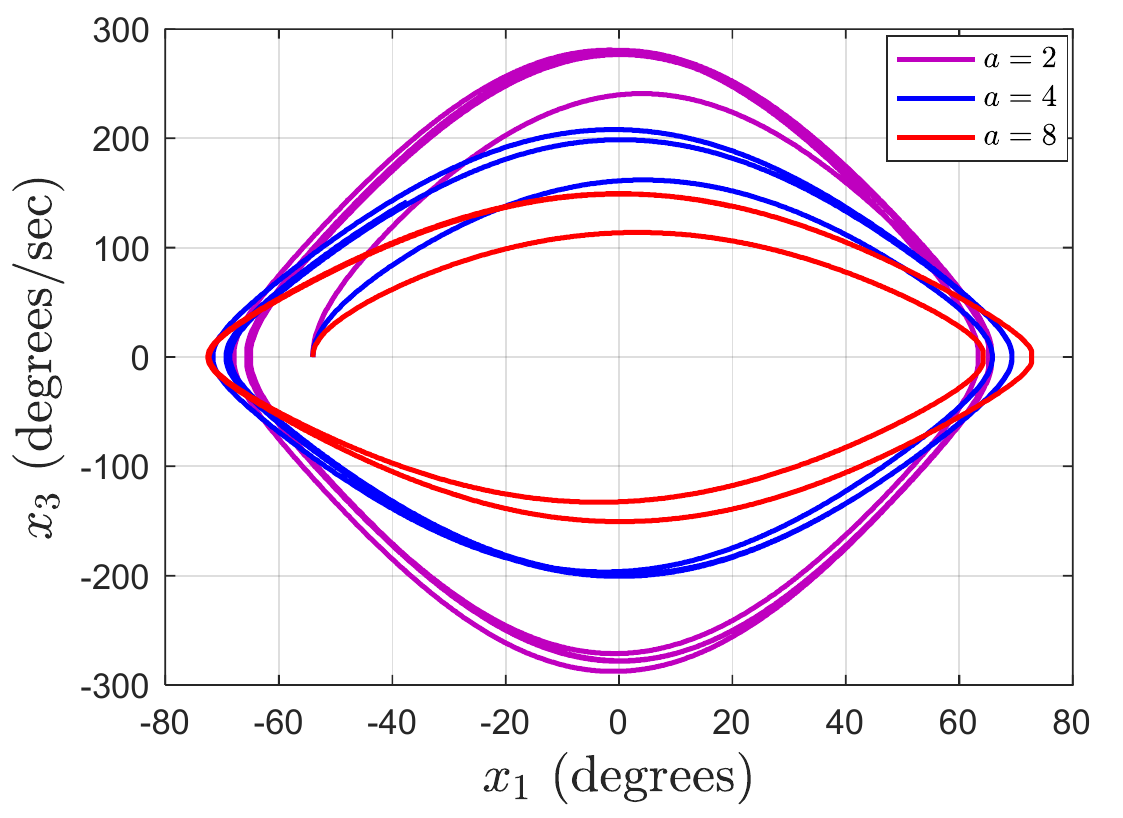}
  \caption{(Cart pendulum, the second controller.) Transient behavior of the state $x_1$ and $x_3$ with different gains $a$ and the same initial conditions. }
 \label{fig:cart-p6}
\end{figure}

An animation of the system behavior may be found at \href{https://www.youtube.com/watch?v=Q5W9Kx0QbFo\&t=9s}{\texttt{\bow{www.youtube.com/watch?v=Q5W9Kx0QbFo\&t=9s}}}.

\subsection{DC-AC Converter}
The last example is a three-phase DC-AC converter with a pure resistive load---see Fig. \ref{fig:converter}. The system dynamics can be described as \cite{ESCetal}
$$
\begin{aligned}
\dot{x}_1 & = - {1\over RC} x_1 + {1\over C}x_3 \\
\dot{x}_2 & = - {1\over RC} x_2 + {1\over C}x_4 \\
\dot{x}_3 & = - {1\over L} x_1 + {E\over L} u_1 \\
\dot{x}_4 & = - {1\over L} x_2 + {E\over L} u_2
\end{aligned}
$$
with the inductance $L>0$, the capacitance $C>0$, the capacitor voltages $(x_1,x_2)$, and the inductor currents $(x_3,x_4)$ in $\alpha\beta$ coordinates, where $E$ is the DC source voltage. The control objective is to generate a sinusoidal signal in the voltages $(x_1,x_2)$.

\begin{figure}[htp!]
  \centering
  \includegraphics[width=10cm]{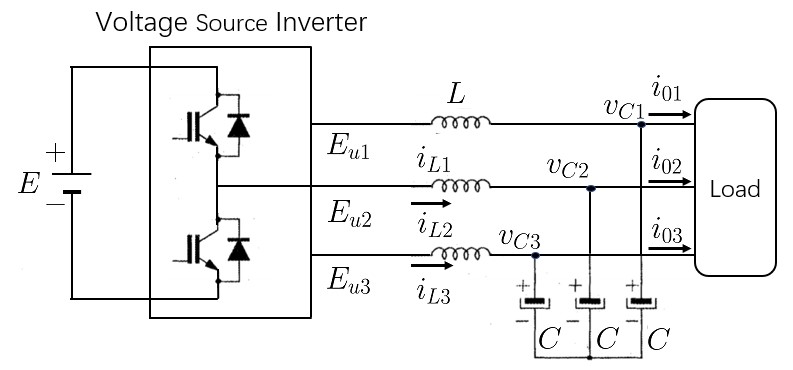}
  \caption{DC-AC converter}\label{fig:converter}
\end{figure}

We propose a target oscillator with $p=2$ described by the equation

\begin{equation}\label{eq:target_converter}
\dot{\xi} = \begmat{- (|\xi|^2 - A^2) & \omega \\ - \omega  & - (|\xi^2| - A^2)} \xi,
\end{equation}
with $A>0,~ \omega >0$. It should be underscored that the target system admits a \emph{unique attractive} orbit $\{\xi \in \rea^2 ||\xi| = A \}$ according to the analysis in \cite{YIetal}. Such a case is slightly different from {\bf A1} where \emph{a family of} orbits are parameterized by initial conditions.

Fixing the first two components of $\pi(x)$ as
$$
\pi_1(\xi) = \xi_1, \; \pi_2(\xi) = \xi_2
$$
and solving the FBI equation, we get
$$
\begin{aligned}
\pi_3(\xi) & = {1\over R}\xi_1 - C(|\xi|^2 - A^2) \xi_1 + C\omega \xi_2 \\
\pi_4(\xi) & = {1\over R}\xi_2 - C\omega \xi_1 - C(|\xi|^2 - A^2)\xi_2.
\end{aligned}
$$
The implicit manifold assumption {\bf A3} can be verified using
$$
\calm := \left\{ x\in \rea^4 ~\Bigg|~ \begmat{x_3 \\ x_4} - \beta(x_1,x_2) =0 \right\}
$$
with $\beta: \rea^2 \to \rea^2$ defined as $\beta(x_1,x_2) = \col(\pi_3(x_1,x_2), \pi_4(x_1,x_2))$. We then choose the off-the-manifold coordinate
$$
z := \begmat{x_3 \\ x_4}   - \beta(x_1,x_2),
$$
the dynamics of which is
$$
\dot{z} = {E\over L}v(x,z) - {1\over L}\begmat{x_1 \\ x_2} - \nabla \beta^\top(x_1,x_2) F(x),
$$
with the feedback law $v(x,z)$ to design and the mapping
$$
F_0(x) :=
\left[
\begin{array}{c}
- {1\over RC} x_1 + {1\over C}x_3 \\ - {1\over RC} x_2 + {1\over C}x_4
\end{array}
\right].
$$
Hence, we can construct the feedback law as
\begequ
\label{v:converter}
v(x,z) = {1\over E}\col(x_1,x_2) + {L \over E} \nabla \beta^\top(x_1,x_2) F_0(x) - \gamma z
\endequ
with a tunable parameter $\gamma >0$. The off-the-manifold coordinate $z$ has a globally exponentially stable equilibrium of zero, and the target oscillator is {almost globally exponentially orbitally} stable. It is relatively trivial to show the boundedness of the closed-loop dynamics with the aid of some basic perturbation analysis, unlike the case with the \emph{undamped} target oscillators, for instance, the examples of cart-pendulum and inertial wheel pendulum systems.

Finally, due to the boundedness of the state there always exist $E_*$ and $\gamma_*>0$ such that if the DC source voltage $E>E_*$ and the gain $\gamma \in (0, \gamma_*)$, we can guarantee $u(t) \in [-1, 1]$ for all $t>0$, thus satisfying the physical constraints. The obtained periodic signals, at the steady state, are
$$
x_1(t) = A\sin(\omega t + \phi), \;
x_2(t) = A\cos(\omega t + \phi),
$$
where the phase $\phi\in \rea$ depends on initial conditions.
%%%%%%%%%%%%%%%%%%%%%%%%%%
\section{Concluding remarks}
\label{sec4}
%%%%%%%%%%%%%%%%%%%%%%%%%%
%
We have shown that, by selecting the target dynamics in the well-known \II method \cite{ASTetal} to possess periodic orbits---instead of an asymptotically stable equilibrium---it is possible to solve the task of inducing orbitally attractive oscillations to general nonlinear systems. As usual with the  \II method, a large flexibility exists in the selection of the target dynamics and the definition of the manifold that is rendered attractive and invariant, which can be exploited to simplify the controller design. The result has been illustrated with some classical examples of mechanical and power electronics systems.t

Current research is under way to develop a systematic procedure to apply the technique that, at this stage, is used on a case-by-case basis. Towards this end, we plan to consider a ``more structured'' class of systems, for instance port-Hamiltonian systems, or a class of physically motivated systems like power converters and electric motors.

\appendix
%
%%%%%%%%%%%%%%
\section{. Proof of Lemma 1}
\lab{appa}
%%%%%%%%%%%%%%%
%
Define the energy of the unperturbed pendulum \eqref{NLTV} as
$$
r(x) := {1\over 2} x_3^2 - a\cos (x_1),
$$
the time derivative of which along the trajectories of the system satisfies
\begequ
\label{dotv}
\dot{r} = x_3 \et.
\endequ
Note that
\begequ
\label{observation1}
r(x) \ge -a ,\quad \forall x~\in \rea^2,
\endequ
and
\begequ
\label{observation2}
|r(x(t))| \in \call_\infty
\quad \Rightarrow \quad
|x_3(t)| \in \call_\infty .
\endequ
Thus, we only need to prove that $r(x(t))$ is bounded.

From the bound
$$
|\dot{x}_3| \le |a\sin x_1| + |\et| \le a+ \ell_1,
$$
we have, for any $t\ge 0$
$$
\begin{aligned}
|x_3(t)| - |x_3(0)| & \le  \big|x_3(t) - x_3(0) \big|  \\
& =  \left|\int_{0}^{t} \dot{x}_3(s) ds \right| \\
& \le  \int_{0}^{t} \big|\dot{x}_3(s) \big| ds \\
& \le ( a+ \ell_1)t,
\end{aligned}
$$
thus
$$
|x_3(t)| \le |x_3(0)| + (a+ \ell_1)t.
$$

Recalling \eqref{dotv}, we have
$$
\begin{aligned}
\dot{r}  \le |x_3||\et|
     = \ell_3  e^{-\ell_2 t} + \ell_4 t  e^{-\ell_2 t},
\end{aligned}
$$
with $\ell_3:=\ell_1|x_3(0)|$ and $\ell_4:=\ell_1(a+\ell_1)$. Then,
$$
\begin{aligned}
r(x(t)) - r(x(0)) & \le \int_{0}^{t} (\ell_3  e^{-\ell_2 s} + \ell_4 s  e^{-\ell_2 s}) ds \\
& =
{\ell_3 \over \ell_2} \left( 1- e^{-\ell_2 t} \right)
+ {\ell_4 \over \ell_2^2}
\left(
1 - \ell_2 te^{-\ell_2t} - e^{-\ell_2t}
\right).
\end{aligned}
$$
As a result
$$
\lim_{t\to\infty} r(x(t)) \le r(x(0)) + {\ell_3 \over \ell_2} + {\ell_4 \over \ell_2^2},
$$
implying $r(x(t)) \in \call_\infty$ for all $t\ge 0$, which completes the proof.
%
%%%%%%%%%%%%%%
\section{. Proof of Lemma 2}
\lab{appb}
%%%%%%%%%%%%%%%
%
Define the energy-like function
$$
\calh_w(w) := {1\over 2}w_2^2 + {a_1\over ka_2}\ln\big(|1+ka_2\cos(w_1)|\big).
$$
which is a first integral of the system \eqref{lem2sys1} in the absence of the decaying term.

This function is {lower bounded as
$$
\calh_w(w)
\ge
\calh_w^{\min} :=
{a_1\over ka_2} \ln
(-1-ka_2)
, \quad \text{for }w_1\in (-\beta_\star,\beta_\star),
$$}
We note that, in view of the constraint \eqref{stacon}, $-1-ka_2>0$, hence $\calh_w^{\min}$ is well-defined. Moreover,
$$
\lim_{|w_1| \to  \beta_\star} \calh_w(w) = +\infty.
$$
We also have the following bounds
\begequ
\label{ineq1}
    \big| w_2 \big| \le \sqrt{2(\calh_w(w)+ \calh_w^{\min})}
\endequ
and
\begequ
\label{ineq2}
    \big| 1 +ka_2\cos(w_1) \big| \ge \exp\bigg\{ {ka_2\over a_1} \calh_w(w) \bigg\}.
\endequ
Clearly,
$$
\begin{aligned}
    \dot{\calh}_w & = {w_2 \et \over 1+ ka_2 \cos(w_1)} \\
                & \le \left| {w_2  \over 1+ ka_2 \cos(w_1)} \right| \ell_1 \exp(-\ell_2 t) \\
               & \le \ell_1 \sqrt{2(\calh_w(w) + \calh_w^{\min})} \exp \left\{ -{ka_2\over a_1} \calh_w(w)\right\}  \exp(-\ell_2 t),
\end{aligned}
$$
where the last inequality has used \eqref{ineq1} and \eqref{ineq2}. To apply the Comparison Lemma we study the boundedness of the following one-dimensional auxiliary system
\begequ
\label{aux1}
\dot{r} =\ell_1 {\sqrt{2(r+\calh_w^{\min})}} \exp \left\{ -{ka_2\over a_1} r\right\}  \exp(-\ell_2 t).
\endequ
Note that $[-\calh_w^{\min},+\infty)$ is an invariant set for the differential equation \eqref{aux1}, with $r=-\calh_w^{\min}$ an equilibrium point. Therefore, we are interested only in the trajectories satisfying $r(t)+\calh_w^{\min} > 0$. In which case, $\Dot{r}>0$, and consequently $r(t)$ is a \emph{strictly increasing} function of time.

Define a function $F(r)$ as
$$
F(r) := \exp \left\{ -{ka_2\over a_1} r\right\}  - \sqrt{2(r+\calh_w^{\min})}.
$$
In view of the monotonicity, it is clear that there exists $r_0 >0$ such that
$$
F(r_0) = 0,
$$
and
$$
F(r) \ge 0 \quad \forall r> r_0.
$$
Therefore, there are two possible scenarios for system \eqref{aux1}:
\begin{enumerate}
  \item[1)] $r(t) < r_0$ for all $t>0$;
  \item[2)] there exists a time instant $t_1\ge 0$ such that $r(t) \ge r_0$ for all $t\ge t_1$.
\end{enumerate}

For Case 1), the boundedness of $r(t)$ follows immediately. For the second case, the dynamics \eqref{aux1} yields
$$
\begin{aligned}
\dot{r} & = \ell_1 \bigg[ \exp\bigg\{ -{ka_2 \over a_1}r \bigg\} - F(r) \bigg] \exp \bigg\{ -{ka_2 \over a_1}r \bigg\} \exp(-\ell_2t) \\
& \le  \ell_1  \exp \bigg\{ - 2{ka_2 \over a_1}r \bigg\} \exp(-\ell_2t),\quad
\forall t\ge t_1,
\end{aligned}
$$
where the first identity has used the definition of $F(r)$, and the second inequality has used the fact that $F(r(t))>0$ for $t\ge t_1$.

For the second case, by applying the Comparison Lemma we construct the auxiliary system
\begequ
\label{aux2}
\dot{v} = \ell_1 \exp \bigg\{ -2{ka_2 \over a_1}v \bigg\} \exp(-\ell_2t)
\endequ
with the initial condition
$
v(0) \ge r_0.
$

For the system \eqref{aux2}, we have
$$
 \dot{v}\exp\bigg\{ 2{ka_2\over a_1} v\bigg\} = \ell_1 \exp(-\ell_2t),
$$
thus integrating via variable separation we get
$$
\begin{aligned}
& \int_{0}^t \dot{v}(s)\exp\bigg\{ 2{ka_2\over a_1} v(s)\bigg\}ds = \int_{0}^t \ell_1 \exp(-\ell_2s)ds
\\
\Rightarrow\quad  &
\int_{v(0)}^{v(t)} \exp\bigg\{ k_0 v(s)\bigg\}dv = \int_{0}^t \ell_1 \exp(-\ell_2s)ds,
\end{aligned}
$$
with $k_0 := -2k{a_2\over a_1} >0$.

After some straightforward calculations, we then get
\begequ
\label{keyident}
\exp(-k_0v(t)) - \exp(-k_0v(0)) = k_0{\ell_1\over \ell_2}\bigg( \exp(-\ell_2t) - 1 \bigg).
\endequ
According to \eqref{keyident}, if
\begequ
\label{con_proof2}
\exp(-k_0v(0)) + k_0{\ell_1\over \ell_2}\bigg( \exp(-\ell_2t) - 1 \bigg) >0,
\endequ
we have
$$
v(t) = - {1\over k_0} \ln \Bigg[ \exp(-k_0v(0)) + k_0{\ell_1\over \ell_2}\bigg( \exp(-\ell_2t) - 1 \bigg) \Bigg].
$$
Using in \eqref{con_proof2} the following inequality
$$
- k_0{\ell_1\over \ell_2} < k_0{\ell_1\over \ell_2}\bigg( \exp(-\ell_2t) - 1 \bigg) ,
$$
we conclude that, if
\begequ
\label{ell_star}
\ell_2 > k_0 {\ell_1} \exp\big\{k_0 v(0)\big\} := \ell_0,
\endequ
the condition \eqref{con_proof2} holds for all $t>0$, implying that the solutions of \eqref{aux2} are bounded. Specifically,\footnote{We would like to point out that for the auxiliary system \eqref{aux2}, if $\ell_2 = \ell_0$ then
$$
\lim_{t\to\infty} v(t) = \infty;
$$
and for the case $\ell_2 \in (0,\ell_0)$ the system \eqref{aux2} has finite escape time.}
$$
\lim_{t\to\infty} |v(t)| = - {1\over k_0} \ln
\bigg\{
\exp\big(-k_0v(0)\big) - k_0{\ell_1\over\ell_2}
\bigg\}
< +\infty.
$$
Now, we return to the first auxiliary system \eqref{aux1} combining Case 1), if $\ell_2$ is large enough, we can obtain the boundedenss of $r(t)$ in terms of the Comparison Lemma and the boundedness of $v(t)$ for the auxiliary system \eqref{aux2}. Using the Comparison Lemma again and selecting
$$
\ell_2^{\min} :=
k_0 {\ell_1} \exp\bigg\{k_0 \max \big\{r_0, \calh_w^{\min} + \calh_w(w(0)) \big\}\bigg\},
$$
for $\ell_2 >\ell_2^{\min}$, the energy-like function $\calh_w(w(t))$ for the system \eqref{lem2sys1} is bounded for all $t>0$. Invoking the inequalities \eqref{ineq1} and \eqref{ineq2}, we complete the proof.

\section*{Acknowledgement}
The authors are grateful to two anonymous reviewers for their insightful remarks that helped to improve the quality of the paper. This paper is supported by Ministry of Science and Higher Education of the Russian Federation, project unique identifier RFMEFI57818X0271, and by the European Union's Horizon 2020 Research and Innovation Programme under grant agreement No 739551 (KIOS CoE).

%\printendnotes

% Submissions are not required to reflect the precise reference formatting of the journal (use of italics, bold etc.), however it is important that all key elements of each reference are included.
%%%%%%%%

\end{document}